\renewcommand{\P}{{\text{pr}}}
\newcommand{\E}{{\mathbb{E}}}
\newcommand{\bx}{{\mathbf{x}}}
\newcommand{\bZ}{{\mathbf{Z}}}
\newcommand{\bz}{{\mathbf{z}}}
\newcommand{\by}{{\mathbf{y}}}
\newcommand{\bone}{{\mathbf{e}}}
\newcommand{\btau}{{\boldsymbol\tau}}
\newcommand{\bmu}{{\boldsymbol\mu}}
\newcommand{\bzeta}{{\boldsymbol\zeta}}
\newcommand{\boeta}{{\boldsymbol\eta}}
\newcommand{\bbeta}{{\boldsymbol\beta}}
\newcommand{\bbtau}{{\bar{\boldsymbol\tau}}}
\newcommand{\bhtau}{{\hat{\boldsymbol\tau}}}
\newcommand{\bm}{{{m}}}
\newcommand{\var}{{\text{var}}}
\newcommand{\Z}{{\mathbf{Z}}}
\newcommand{\cF}{{\mathcal{F}}}
\newcommand{\cC}{{\mathcal{C}}}
\newcommand{\cZ}{{\mathcal{Z}}}
\newtheorem{theorem}{Theorem}
\newtheorem{proposition}{Proposition}
\newtheorem{corollary}{Corollary}
\newtheorem{lemma}{Lemma}
\theoremstyle{definition}
\newtheorem{condition}{Condition}
\theoremstyle{remark}
\title{On mitigating the analytical limitations of finely stratified experiments}
\author{Colin B. Fogarty \thanks{Operations Research and Statistics Group, MIT Sloan School of Management, Massachusetts Institute of Technology, Cambridge MA 02142 (e-mail: \texttt{cfogarty@mit.edu})}}
\date{}
\begin{document}
\maketitle
\begin{abstract}
While attractive from a theoretical perspective, finely stratified experiments such as paired designs suffer from certain analytical limitations not present in block-randomized experiments with multiple treated and control individuals in each block. In short, when using an appropriately weighted difference-in-means to estimated the sample average treatment effect, the traditional variance estimator in a paired experiment is conservative unless the pairwise average treatment effects are constant across pairs; however, in more coarsely stratified experiments, the corresponding variance estimator is unbiased if treatment effects are constant within blocks, even if they vary across blocks. Using insights from classical least squares theory, we present an improved variance estimator appropriate in finely stratified experiments. The variance estimator is still conservative in expectation for the true variance of the difference-in-means estimator, but is asymptotically no larger than the classical variance estimator under mild conditions. The improvements stem from the exploitation of effect modification, and thus the magnitude of the improvement depends upon on the extent to which effect heterogeneity can be explained by observed covariates. Aided by these estimators, a new test for the null hypothesis of a constant treatment effect is proposed. These findings extend to some, but not all, super-population models, depending on whether or not the covariates are viewed as fixed across samples in the super-population formulation under consideration.
\end{abstract}

\section{Introduction}\label{sec:intro}
\subsection{The analytical limitations of finely stratified experiment}

When considering competing experimental designs, both theoretical and practical concerns must be taken into account. While the advice stemming from theoretical derivations is often in harmony with advice addressing issues of implementation, discordant recommendations can be encountered in the literature. As an illustration, consider the choice of granularity of stratification in a randomized experiment as it pertains to the variance of the resulting difference-in-means estimator of the average treatment effect. \citet{imb11} demonstrates that when considering, \textit{ex ante}, whether one should use a completely randomized experiment or a block-randomized experiment, the classical difference-in-means estimator for the average treatment effect in block-randomized experiment has a variance which cannot be higher than that of the estimator from a completely randomized experiment; see also \citet{fis35, coc57, cox58} and \citet{gre04} among many. By the same logic, a given block can be further broken into substrata while not increasing the estimator's variance. This leads \citet{ima09} and \citet{imb11} to prefer paired experiments from a theoretical perspective. \citet{kal13} further notes that from a population perspective, if one believes the response functions under treatment and control are Lipchitz with respect to some distance metric $\delta(\bx_i, \bx_j)$, then optimal pair matching with respect to $\delta(\bx_i, \bx_j)$ minimizes the variance of the difference-in-means estimator.

Moving away from designs with \textit{a priori} fixed block sizes, \citet{hig16} present a new experimental design called ``threshold blocking" which produces stratifications wherein each block contains \textit{at least} some number, call it $k$, individuals in each treatment arm. Taking $k=1$ in a treatment-control experiment then yields a design that is more flexible than pairing. \citet{hig16} present a near-optimal threshold blocking algorithm when one takes minimizing the maximal within-block covariate discrepancy between any two individuals in the same block as the objective. For the classical treatment-control experiment, the optimal stratification is mix of pairs and triplets, as any feasible stratum with four or more individuals can broken down into substrata of sizes two or three without increasing covariate discrepancy. \citet{sav15} illustrates that this additional flexibility from allowing for both pairs and triplets can result in lower estimator variance than a paired design, much in the same way that variable ratio matching tends to outperform fixed ratio matching in observational studies \citep{han04}. 

We define a \textit{finely stratified} design as one where within each block, there is either exactly one treated individual or exactly one control individual; both paired studies and optimal stratifications returned by threshold blocking satisfy this definition. We contrast these with \textit{coarsely stratified} designs, wherein each block has at least two individuals in each treatment group. Of course in principle this experimental taxonomy is not exhaustive as a treatment-control experiment could have both fine and coarse strata; we ignore this possibility in what follows. The preceding discussion has illustrated the theoretical merits of fine stratifications relative to coarse stratifications; however, finely stratified designs face  certain ``analytical limitations" avoided by coarsely stratified designs \citep{kla97, imb11, sav15}. As is well known, the true variance of difference-in-means estimator for the sample average treatment effect cannot be identified without further assumptions being made on the individual level treatment effects. Following the tradition of \citet{ney23}, conventional estimators for this variance exist which are conservative in expectation with respect to the experimental design's randomization distribution; see \citet{gad01} for an overview. It is when considering the magnitude of conservativeness for different experimental designs' standard variance estimators that the practical issues faced by finely stratified designs come to light. 

As will be presented explicitly in \S \ref{sec:vcompare}, the conventional variance estimator for a paired experiment is conservative in expectation unless the average treatment effect is constant across pairs, in which case it is unbiased; however, the typical variance estimator in a coarsely stratified experiment is unbiased so long as the treatment effect is constant \textit{within} blocks, even if the effects are heterogeneous across blocks. The practitioner must conduct hypothesis tests and form confidence intervals for the sample average treatment effect using a variance estimator appropriate for the design at hand. Hence, if the practitioner believes that the blocks in her experiment were formed on the basis of effect modifying covariates, any benefits in precision from employing a finely stratified design may be washed away by the increased conservativeness of the corresponding variance estimator. \citet{kla97} write that ``these limitations lead us...to favour stratified designs in which there are at least two [units] in each stratum" \citep[p. 1753]{kla97}. \citet{imb11} similarly notes that ``[These limitations are an] important reason to prefer experiments with at least two units of each treatment type in each stratum" \citep[p. 17]{imb11}.

\subsection{An insight from classical least squares squares theory}
The analytical limitations of finely stratified experiments thus present an unappealing gap between theory and practice.  Practical limitations hinder the actualization of theoretical benefits, an issue which we now seek to mitigate. Recent work by \citet{aro13, lin13, fog16, blo16} and \citet{lu16} among others has shown how regression adjustment can be utilized to provide improved estimators for the average treatment effect in various experimental designs. In this work, we will demonstrate how illustrate how regression adjustment can be utilized to yield improved \textit{variance} estimators in finely stratified experiments while using the classical difference-in-means estimator for the average treatment effect, hence preserving the so-called ``hands above the table" analysis \citep{fre08, lin13}. The key takeaway from this work is that effect modification can be exploited in a finely stratified experiment to yield improved variance estimates even when the model is misspecified. As the potential impact of effect modification is the source of the discrepancy between the variance estimators in finely and coarsely stratified experiments, this serves to close the gap between variance estimators in these respective designs. See \citet{aba08, din16, aba17} for recent work on the role of effect modification in variance estimation in related contexts.

Before proceeding, let us take a detour into classical least squares theory to provide insight into the improvements which will follow.  Suppose we have $n$ responses $\mathbf{{y}} = (y_1,...,y_n)^T$, and an $n\times K$ centered matrix of covariates $\tilde{X} = (I - \mathbf{e}\mathbf{e}^T/n)X$, where $I$ is the identity matrix and $\mathbf{e}$ is a vector containing $n$ ones. Consider running two regressions, the first a regression of ${\by}$ on $\bone$ and the second a regression of ${\by}$ on ${\bone}$ and $\tilde{X}$. By orthogonality, the coefficient on the intercept column, $\hat{\beta}_0$, will equal the sample mean $\bar{y}$ in both regressions. On the other hand, the variance estimators for  $\hat{\beta}_0$ will differ between the two regressions. For the regression on the intercept, the classical variance estimator for $\hat{\beta}_0$ is $\var(\hat{\beta}_0) = \sum_{i=1}^n(y_i - \bar{y})^2/(n(n-1))$. For a regression of $y$ on $\bone$ and $\tilde{X}$, the classical variance for $\hat{\beta}_0$ is
$\var(\hat{\beta}_0\mid \tilde{X}) = \sum_{i=1}^n(y_i - \bar{y} - \mathbf{\tilde{x}}^T_i(\tilde{X}^T\tilde{X})^{-1}\tilde{X}^Ty))^2/(n(n-K-1)).$ As a result, $\var(\hat{\beta}_0\mid \tilde{X}) \lessapprox \var(\hat{\beta}_0)$. The use of this improved variance estimator, $\var(\hat{\beta}_0\mid \tilde{X})$, is typically justified by an ancillarity argument: if the assumptions underpinning the regression model are satisfied, then the distribution of $X$ is ancillary for inference on any slope coefficient ${\beta}_k$. The conditionality principle would then support conditioning on $X$ in the inference that follows, hence restricting attention to the relevant subset of the sample space. 

\citet{buj14} provide an illuminating discussion not only of the classical arguments for conditioning on $X$, but also of the breakdown of these arguments in the presence on model misspecification. The fundamental issue is that when $X$ is itself considered to be random, $X$ is ancillary for inference on $\beta_k$ if and only if the model is correctly specified. The framework considered therein is one of a practitioner jointly sampling responses and covariates $iid$ from some target population, with the target of inference being the best linear approximation to the response function for this population. In the analysis of randomized experiments, a generative model of this nature is often implausible, as individuals within a given experiment need not constitute a representative sample. As such, inference is performed on local estimands such as the average treatment effect for the individuals in the experiment at hand, with the act of randomization itself provides the basis for inference for these estimands \citep{ney23, fis35, rub74, imb15}. For these local estimands, conditioning on the covariates for the individuals in the experiment is justified without an ancillarity, argument, as the estimands are themselves defined with respect to the sample at hand. As will be illustrated, variance estimators which utilize $X$ will furnish improvements in power while facilitating Neyman-style conservative inference for the sample average treatment effect.

\section{The sample average treatment effect}
\subsection{Notation for a block-randomized experiment}
\label{sec:notation}


There are $B$ independent blocks. The $i^{th}$ of $B$ blocks contains $n_i$ individuals, of whom $n_{1i}$ receive the treatment and $n_{0i}$ receive the control. There are $N = \sum_{i=1}^Bn_i$ total individuals in the study. Let $Z_{ij}$ be an indicator of whether or not the $j^{th}$ individual in block $i$ receives the treatment, such that $\sum_{j=1}^{n_{i}}Z_{ij} = n_{1i}$ and $\sum_{j=1}^{n_{i}}(1-Z_{ij}) = n_{0i}$. A finely stratified experiment is then characterized by $\min\{n_{0i}, n_{1i}\} = 1$ for all $i$, while in a coarsely stratified experiment $\min\{n_{0i}, n_{1i}\} > 1$ for all $i$. Individual $j$ in block $i$ has a $K$-dimensional vector of measured covariates $\bx_{ij} = (x_{ij1},...,x_{ijK})$. Each individual has a potential outcome under treatment, $r_{1ij}$, and under control, $r_{0ij}$, $i=1,...,B; j = 1,...,n_{i}$. The pair of potential outcomes $(r_{1ij}, r_{0ij})$ is not jointly observable for any individual. Instead, we observe the response $R_{ij} = r_{1ij}Z_{ij} +r_{0ij}(1-Z_{ij})$ for each individual.  As a consequence, the individual level treatment effect  $\tau_{ij} = r_{1ij} - r_{0ij}$ is not observable for any individual, nor is the average of the treatment effects in any block $i$, $\bar{\tau}_i = n_i^{-1}\sum_{j=1}^{n_{i}}(r_{1ij} - r_{0ij})$ \citep{ney23, rub74}.

Let $\Omega$ be the set of $\prod_{i=1}^B \binom{n_i}{n_{1i}}$ possible values of $\mathbf{Z} = (Z_{11}, Z_{12},..., Z_{Bn_{B}})^T$ under the block-randomized design. Each $z \in \Omega$ has probability $|\Omega|^{-1}$ of being selected, where the notation $|A|$ denotes the cardinality of the set $a$. Let $\cZ$ denote the event $Z \in \Omega$. Quantities dependent on the assignment vector such as $\Z$ and $\mathbf{R} = (R_{11}, R_{12},...,R_{Bn_{B}})^T$ are random, whereas $\cF = \{(r_{1ij}, r_{0ij}, \bx_{ij}), i = 1,...,B, j = 1,...,n_{B}\}$ contains fixed quantities for the experiment at hand. In a block-randomized experiment, $\P(\bZ = \bz\mid\cF, \cZ) = \P(\bZ = \bz\mid\cZ)=  |\Omega|^{-1} = \left(\prod_{i=1}^B\binom{n_i}{n_{1i}}\right)^{-1}$, and $\P(Z_{ij} = 1 \mid \cF, \cZ) = \P(Z_{ij} = 1 \mid \cZ)= n_{1i}/n_{i}$.

\subsection{The estimand and the estimator}
\label{sec:ATE}
The sample average treatment effect, or $SATE$, is defined as
\begin{align*}\bar{\Delta} & = \frac{1}{N}\sum_{i=1}^B\sum_{j=1}^{n_{i}}\tau_{ij} = \frac{1}{B}\sum_{i=1}^B w_i\bar{\tau}_i,\end{align*} where $w_i = B(n_i/N)$. The conventional unbiased estimator for $\bar{\tau}_i$, the average treatment effect for individuals in block $i$, is simply the observed difference-in-means between the treated and control individuals in block $i$.
\begin{align*}
\hat{\tau}_i &= \sum_{j=1}^{n_{i}}\left(\frac{Z_{ij}r_{1ij}}{n_{1i}} - \frac{(1-Z_{ij})r_{0ij}}{n_{0i}}\right).
\end{align*}
The classical unbiased estimator for the overall sample average treatment effect $\bar{\Delta}$ is
\begin{align}\label{eq:dim} \hat{\Delta} &= B^{-1}\sum_{i=1}^Bw_i\hat{\tau}_i,\end{align} i.e. a weighted average of the block-specific estimators with $n_i/N$ serving as weights \citep[Chapter 2]{obs}.

\section{A comparison of standard variance estimators}\label{sec:vcompare}
\subsection{Conventional variance estimation in coarsely stratified experiments}
For block $i$, define the block-specific averages of the potential outcomes under treatment and control as $\bar{r}_{1i} = n_i^{-1}\sum_{i=1}^{n_{i}}r_{1ij}$ and $\bar{r}_{0i} = n_i^{-1}\sum_{i=1}^{n_{i}}r_{0ij}$. Further, define  $\sigma^2_{1i}$, $\sigma^2_{0i}$, and $\sigma^2_{\tau i}$ by
\begin{align*} \sigma^2_{1i} = \frac{1}{n_{i}-1}\sum_{j=1}^{n_{i}}\left(r_{1ij} - \bar{r}_{1i}\right)^2;\;\;\;\sigma^2_{0i} = \frac{1}{n_{i}-1}\sum_{j=1}^{n_{i}}\left(r_{0ij} - \bar{r}_{0i}\right)^2;\;\;\;\sigma^2_{\tau i} = \frac{1}{n_{i}-1}\sum_{j=1}^{n_{i}}\left(\tau_{ij} - \bar{\tau}_{i}\right)^2.\end{align*}
The variance of the sample average treatment effect estimator in block $i$, $\var(\hat{\tau}_i\mid \cF, \cZ)$, can be expressed as \citep[Theorem 6.2]{imb15}
\begin{align*}\var(\hat{\tau}_i\mid \cF, \cZ) &= \frac{\sigma^2_{1i}}{n_{1i}} + \frac{\sigma^2_{0i}}{n_{0i}} - \frac{\sigma^2_{\tau i}}{n_i}.\end{align*} This immediately yields the following expression for $\var(\hat{\Delta}\mid \cF, \cZ)$:
\begin{align*}\var(\hat{\Delta}\mid \cF, \cZ) &= \frac{1}{B^2}\sum_{i=1}^Bw_i^2\left(\frac{\sigma^2_{1i}}{n_{1i}} + \frac{\sigma^2_{0i}}{n_{0i}} - \frac{\sigma^2_{\tau i}}{n_i}\right).\end{align*}

This variance is unknown in practice because it depends on the missing potential outcomes. In a coarsely stratified experiment where we have $\min\{n_{1i}, n_{0i}\} \geq 2$ for all $i$, the conventional estimator for $\var(\hat{\Delta} \mid\cF, \cZ)$ is based on an appropriately weighted sum of the sample variances of the treated and control responses in each block. Let $\bar{R}_{1i} = n_{1i}^{-1}\sum_{i=1}^{n_{i}}Z_{ij}r_{1ij}$ and $\bar{R}_{0i} = n_{0i}^{-1}\sum_{i=1}^{n_{i}}(1-Z_{ij})r_{0ij}$ be the observed averages of responses for the treated and control individuals in block $i$. Further, let $s^2_{1i}$ and $s^2_{0i}$ be the sample variances for the responses of the treated and control units in block $i$,
\begin{align*}
s^2_{1i} = \frac{1}{n_{1i} - 1}\sum_{j=1}^{n_{i}}Z_{ij}(r_{1ij} - \bar{R}_{1i})^2;\;\;\;s^2_{0i} = \frac{1}{n_{0i} - 1}\sum_{j=1}^{n_{i}}(1-Z_{ij})(r_{0ij} - \bar{R}_{0i})^2
\end{align*}
The classical variance estimator in a coarsely stratified experiment takes on the following form:
\begin{align*}S^2_{CS} &= \frac{1}{B^2}\sum_{i=1}^Bw_i^2\left(\frac{s^2_{1i}}{n_{1i}} + \frac{s^2_{0i}}{n_{0i}}\right).\end{align*}
A well known fact dating back to \citet{ney23} is that this estimator yields conservative inference for the sample average treatment effect, since
\begin{align}\label{eq:biascs}\E[S^2_{CS}\mid \cF, \cZ] - \var(\hat{\Delta}\mid \cF, \cZ) &= \frac{1}{B^2}\sum_{i=1}^Bw_i^2\sigma^2_{\tau i}.\end{align}Hence, the variance estimator $S^2_{CS}$ is an upper bound on $\var(\hat{\Delta}\mid \cF, \cZ)$ in expectation unless the treatment effect is constant within each block (i.e. if for each block $i$, $\tau_{ij} = \bar{\tau}_{i}$ for $j=1,...,n_i$). This thus enables Neyman-style conservative inference on $\bar{\Delta}$ to proceed using $S^2_{CS}$.

\subsection{Classical results on variance estimation in finely stratified experiments}

In a finely stratified experiment, at least one of $s^2_{1i}$ and $s^2_{0i}$ will be undefined as  $\min\{n_{1i}, n_{0i}\} = 1$. As a result, the estimator $S^2_{CS}$ cannot be employed. To the best of our knowledge there does not exist a ``classical" variance estimator for the general class of finely stratified experiments without making assumptions such as additivity of treatment effects or equal variance of potential outcomes \citep{obs, han04, sav15}. In the particular case of paired designs where $n_{1i} = n_{0i} = 1$ for all strata, the classical variance estimator is simply the sample variance of the observed paired differences divided by the number of pairs,
\begin{align}\label{eq:pairc}
S^2_{P}&= \frac{1}{B(B-1)}\sum_{i=1}^B(\hat{\tau}_i - \hat{\Delta})^2.\end{align} \citet{ima08} discusses inference for the sample average treatment effect within a paired design. Proposition 1 of that work illustrates that $S^2_P$ is also an upper bound in expectation for $\var(\hat{\Delta}\mid \cF, \cZ)$, and that the degree of the bias is given by
\begin{align}\label{eq:biasp}
\E[S^2_{P}\mid \cF, \cZ] - \var(\hat{\Delta}\mid \cF, \cZ) &= \frac{1}{B(B-1)}\sum_{i=1}^B(\bar{\tau}_i - \bar{\Delta})^2.\end{align}

A comparison of bias expressions (\ref{eq:biascs}) and (\ref{eq:biasp}) reveals the analytical limitations alluded to in \S 1.1. For a paired design, $S^2_P$ is biased upwards unless the average treatment effects are the same across pairs. In a coarsely stratified design, $S^2_{CS}$ is unbiased if there is additivity within blocks, even if there is effect heterogeneity across blocks. If the blocks were formed using covariates that are thought to be effect modifiers, it may be the case that the coarsely stratified design yields an unbiased estimator for the variance, while the paired design would yield a variance estimator that is substantially biased upwards. Were (\ref{eq:pairc}) the only variance estimator available to facilitate inference in a paired experiment, the practitioner in this case may well be justified in preferring the more coarsely stratified design as a means of shrinking confidence intervals and yielding more powerful hypothesis tests.

\section{Conservative variance estimators in finely stratified experiments}
\subsection{Two recipes with projection matrices}
Let $Q$ be an arbitrary $B \times L$ matrix with $L<B$, and let $H_Q = Q^T(Q^TQ)^{-1}Q$ be the orthogonal projection of $\mathbb{R}^B$ onto the column space of $Q$. Let $h_{Qij}$ be the $\{i,j\}$ element of $H_Q$. Define $y_i = \hat{\tau}_i/\sqrt{1-h_{Qii}}$ and $\mu_i = \bar{\tau}_i/\sqrt{1-h_{Qii}}$. Let $\mathbf{y} = (y_1,...,y_B)^T$, and let the analogous definitions hold for $\bmu$, $\bhtau$, and $\bbtau$. Finally, let $\Psi_Q$ be a $B\times B$ diagonal matrix whose $\{i,i\}$ entry equals $1/(1-h_{Qii})^2$

Let $W$ be a $B\times B$ diagonal matrix whose $i^{th}$ diagonal element contains $w_i = Bn_i/N$. We will now show that the matrix $Q$ can be used to produce two variance estimators which are conservative in expectation for $\var(\widehat{\Delta}\mid \cF, \cZ)$

Define the first of these estimators, $S^2_{1}(Q)$, as
\begin{align}\label{eq:sq}
{S}^2_{1}(Q)&=\frac{1}{B^2}\by^TW(I-H_Q)W.
\end{align} 
\begin{proposition}\label{prop:Q1}
If $Q$ is constant across all elements of $\Omega$:
\begin{align*}\E[S^2_{1}(Q) \mid \cF, \cZ] - \var(\hat{\Delta}\mid \cF, \cZ)
&=  \frac{1}{B^2}\bmu^TW(I-H_Q)W\bmu\geq 0
\end{align*}
\end{proposition}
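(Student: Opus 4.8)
The plan is to recognize $S^2_{1}(Q)$ as a quadratic form in a rescaled vector and then apply the elementary expectation identity for quadratic forms. First I would set $\mathbf{u} = W\by$, whose $i^{th}$ entry is $u_i = w_i\hat{\tau}_i/\sqrt{1-h_{Qii}}$, so that (after supplying the trailing $\by$ evidently dropped from the display) $S^2_{1}(Q) = B^{-2}\,\mathbf{u}^T(I-H_Q)\mathbf{u}$. Because $Q$ is held constant across all elements of $\Omega$, the matrix $H_Q$ and each leverage $h_{Qii}$ are nonrandom with respect to the randomization distribution; I may therefore treat $I-H_Q$ as a fixed symmetric matrix throughout the computation of $\E[\,\cdot\mid\cF,\cZ]$. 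This constancy is exactly what permits pulling $H_Q$ outside the expectation, and is the role played by the stated hypothesis.

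Next I would invoke the identity $\E[\mathbf{u}^T A\mathbf{u}] = \E[\mathbf{u}]^T A\,\E[\mathbf{u}] + \mathrm{tr}(A\,\cov(\mathbf{u}))$ with $A = I-H_Q$. Unbiasedness of the block-level difference-in-means, $\E[\hat{\tau}_i\mid\cF,\cZ] = \bar{\tau}_i$, gives $\E[\mathbf{u}\mid\cF,\cZ] = W\bmu$, since $\E[y_i\mid\cF,\cZ] = \bar{\tau}_i/\sqrt{1-h_{Qii}} = \mu_i$. The mean term thus contributes precisely $B^{-2}\bmu^T W(I-H_Q)W\bmu$, which matches the asserted bias. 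The decisive structural input is that the $B$ blocks are independently randomized, so $\cov(\hat{\tau}_i,\hat{\tau}_{i'}\mid\cF,\cZ)=0$ for $i\neq i'$; consequently $\cov(\mathbf{u}\mid\cF,\cZ)$ is diagonal with $i^{th}$ entry $w_i^2\,\var(\hat{\tau}_i\mid\cF,\cZ)/(1-h_{Qii})$.

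The crux is then the trace term. Since $\cov(\mathbf{u}\mid\cF,\cZ)$ is diagonal, only the diagonal of $I-H_Q$ enters, and $\mathrm{tr}\big((I-H_Q)\cov(\mathbf{u}\mid\cF,\cZ)\big) = \sum_{i=1}^B (1-h_{Qii})\cdot w_i^2\var(\hat{\tau}_i\mid\cF,\cZ)/(1-h_{Qii}) = \sum_{i=1}^B w_i^2\var(\hat{\tau}_i\mid\cF,\cZ)$. Dividing by $B^2$ reproduces $\var(\hat{\Delta}\mid\cF,\cZ)$ exactly. This reveals that the normalization $y_i = \hat{\tau}_i/\sqrt{1-h_{Qii}}$ is engineered precisely so that the factor $(1-h_{Qii})$ sitting on the $i^{th}$ diagonal entry of $I-H_Q$ is cancelled, leaving each block's variance contribution undistorted. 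Subtracting $\var(\hat{\Delta}\mid\cF,\cZ)$ from $\E[S^2_{1}(Q)\mid\cF,\cZ]$ therefore isolates the mean term as the bias.

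Finally, nonnegativity is immediate: $I-H_Q$ is the orthogonal projection onto the orthogonal complement of the column space of $Q$, hence positive semidefinite, so $\bmu^T W(I-H_Q)W\bmu = (W\bmu)^T(I-H_Q)(W\bmu)\geq 0$. I expect the main obstacle to be bookkeeping rather than conceptual: one must verify carefully that the off-diagonal entries of $\cov(\mathbf{u}\mid\cF,\cZ)$ genuinely vanish—this rests entirely on cross-block independence—and that the diagonal cancellation is exact, so that the trace collapses to $\sum_i w_i^2\var(\hat{\tau}_i\mid\cF,\cZ)$ with no residual leverage terms. Were $Q$ permitted to depend on $\mathbf{Z}$, this clean separation of the mean and trace terms would break down, which underscores why constancy of $Q$ across $\Omega$ is indispensable.
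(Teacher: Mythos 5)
Your proposal is correct and follows essentially the same route as the paper: both apply the quadratic-form expectation identity $\E[\mathbf{u}^TA\mathbf{u}] = \E[\mathbf{u}]^TA\E[\mathbf{u}] + \mathrm{tr}(A\cov(\mathbf{u}))$ to $B^2S^2_1(Q) = \by^TW(I-H_Q)W\by$, use cross-block independence to get a diagonal covariance with entries $\var(\hat{\tau}_i\mid\cF,\cZ)/(1-h_{Qii})$, observe that the leverage normalization cancels against the diagonal of $W(I-H_Q)W$ so the trace collapses to $\sum_i w_i^2\var(\hat{\tau}_i\mid\cF,\cZ)$, and conclude nonnegativity from positive semidefiniteness of the projection $I-H_Q$. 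You also correctly note the missing trailing $\by$ in the paper's display of $S^2_1(Q)$.
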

\begin{proof} Define $\bmu$ as before, and let $\Lambda$ be the covariance matrix for $y$, a diagonal matrix with $\Lambda_{ii}= 1/(1-h_{Qii})\left({\sigma^2_{1i}}/{n_{1i}} + {\sigma^2_{0i}}/{n_{0i}} - {\sigma^2_{\tau i}}/{n_i}\right)$. Noting that $W(I-H_Q)W$ is symmetric,
\begin{align*}
B^2E[{S}^2_{1}(Q)\mid \cF, \cZ] &= tr(\Lambda W(I-H_Q)W) + \bmu^TW(I-H_Q)W\bmu\\
&= \sum_{i=1}^Bw_i^2\left(\frac{\sigma^2_{1i}}{n_{1i}} + \frac{\sigma^2_{0i}}{n_{0i}} - \frac{\sigma^2_{\tau i}}{n_i}\right) + \bmu^TW(I-H_Q)W\bmu
\end{align*}
Recalling that $\var(\hat{\Delta}\mid \cF, \cZ) = B^{-2}\sum_{i=1}^Bw_i^2\left({\sigma^2_{1i}}/{n_{1i}} + {\sigma^2_{0i}}/({n_{0i}}) - {\sigma^2_{\tau i}}/{n_i}\right)$
\begin{align*} \E\left[S^2_{1}(Q) \mid \cF, \cZ\right] - \var(\hat{\Delta} \mid \cF, \cZ) &= \frac{1}{B^2}\bmu^TW(I-H_Q)W\bmu \geq 0 ,\end{align*} where the last line stems from $(I-H_Q)$ being a projection matrix, and hence positive semi-definite.
\end{proof}

Define the second estimator, $S^2_{2}(Q)$, as
\begin{align}\label{eq:sq2}
{S}^2_{2}(Q)&=\frac{1}{B^2}\bhtau^TW(I-H_Q)\Psi_Q(I-H_Q)W\bhtau,
\end{align}

\begin{proposition}\label{prop:Q2}
If $Q$ is constant across all elements of $\Omega$:
\begin{align*}&\E[S^2_{2}(Q) \mid \cF, \cZ] - \var(\hat{\Delta}\mid \cF, \cZ)\\
&=  \frac{1}{B^2}\sum_{i=1}^Bw_i^2\left(\frac{\sigma^2_{1i}}{n_{1i}} + \frac{\sigma^2_{0i}}{n_{0i}} - \frac{\sigma^2_{\tau i}}{n_i}\right)\sum_{j\neq i}\frac{h_{Qij}^2}{(1-h_{Qjj})^2} + \frac{1}{B^2}\bbtau^TW(I-H_Q)\Psi_Q(I-H_Q)W\bbtau \geq 0
\end{align*}
\end{proposition}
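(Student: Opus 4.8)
The plan is to mirror the argument used for Proposition \ref{prop:Q1}, treating $S^2_2(Q)$ as a quadratic form in the vector $\bhtau = (\hat{\tau}_1,\dots,\hat{\tau}_B)^T$ and applying the standard expectation identity $\E[\bhtau^T A \bhtau \mid \cF, \cZ] = tr(AD) + \bbtau^T A \bbtau$, where $A = W(I-H_Q)\Psi_Q(I-H_Q)W$ is symmetric, $\E[\bhtau \mid \cF, \cZ] = \bbtau$, and $D = \Cov(\bhtau \mid \cF, \cZ)$. The crucial structural fact, exactly as in the first proposition, is that the $B$ blocks are randomized independently and $\hat{\tau}_i$ depends only on the assignments within block $i$; hence $D$ is diagonal with $D_{ii} = \var(\hat{\tau}_i \mid \cF, \cZ) = \sigma^2_{1i}/n_{1i} + \sigma^2_{0i}/n_{0i} - \sigma^2_{\tau i}/n_i$. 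The mean term $B^{-2}\bbtau^T A \bbtau$ reproduces verbatim the second summand of the claimed identity, so the entire content of the proposition reduces to evaluating the trace term $B^{-2}tr(AD)$.

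The main calculation, and the step I expect to be the crux, is this trace. Since $W$ and $D$ are diagonal, one has $tr(AD) = \sum_{i=1}^B w_i^2 D_{ii} M_{ii}$, where $M = (I-H_Q)\Psi_Q(I-H_Q)$ and $M_{ii}$ is its $i$th diagonal entry. Writing $P = I - H_Q$, which is symmetric with $P_{ij} = \delta_{ij} - h_{Qij}$, and using that $\Psi_Q$ is diagonal with entries $1/(1-h_{Qjj})^2$, one obtains $M_{ii} = \sum_{j} P_{ij}^2 / (1-h_{Qjj})^2$. The key cancellation occurs in the $j = i$ term: there $P_{ii}^2 = (1-h_{Qii})^2$ cancels the factor $1/(1-h_{Qii})^2$ to leave exactly $1$, while each $j \neq i$ term contributes $h_{Qij}^2 / (1-h_{Qjj})^2$. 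Thus $M_{ii} = 1 + \sum_{j \neq i} h_{Qij}^2 / (1-h_{Qjj})^2$, and $tr(AD)$ splits into $\sum_i w_i^2 D_{ii} = B^2 \var(\hat{\Delta} \mid \cF, \cZ)$ plus the double sum $\sum_i w_i^2 D_{ii} \sum_{j \neq i} h_{Qij}^2/(1-h_{Qjj})^2$. Dividing by $B^2$ and subtracting $\var(\hat{\Delta} \mid \cF, \cZ)$ then produces the stated identity.

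For the nonnegativity I would treat the two summands separately. The cross term is a sum of products $w_i^2 D_{ii} h_{Qij}^2/(1-h_{Qjj})^2$; since $D_{ii} = \var(\hat{\tau}_i \mid \cF, \cZ) \geq 0$ and the remaining factors are manifestly nonnegative (with $h_{Qjj} < 1$ a standing requirement for $\Psi_Q$ and $\by$ to be well defined), the whole sum is nonnegative. For the quadratic form, setting $\mathbf{v} = (I-H_Q)W\bbtau$ gives $\bbtau^T W(I-H_Q)\Psi_Q(I-H_Q)W\bbtau = \mathbf{v}^T \Psi_Q \mathbf{v} = \sum_i v_i^2/(1-h_{Qii})^2 \geq 0$, since $\Psi_Q$ is diagonal with strictly positive entries. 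Summing the two nonnegative pieces completes the argument. Relative to Proposition \ref{prop:Q1}, the only genuinely new work is the diagonal-entry computation of $M_{ii}$ and the cancellation it produces; everything else is the same quadratic-form bookkeeping, so I would present the $M_{ii}$ step explicitly and treat the remainder concisely.
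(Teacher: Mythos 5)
Your proof is correct and follows essentially the same route as the paper: both compute $\E[S^2_2(Q)\mid\cF,\cZ]$ via the quadratic-form identity with the diagonal covariance of $\bhtau$, extract the diagonal entries of $(I-H_Q)\Psi_Q(I-H_Q)$ to get the factor $1+\sum_{j\neq i}h_{Qij}^2/(1-h_{Qjj})^2$, and conclude nonnegativity from positive semidefiniteness of $(I-H_Q)\Psi_Q(I-H_Q)$. Your write-up is in fact slightly more explicit than the paper's in deriving $M_{ii}$ and in noting that the paper's $\Sigma_{ii}$ should be the block variance itself (the paper's displayed reciprocal is a typo), but there is no substantive difference in approach.
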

\begin{proof} Define $\bbtau$ as before, and let $\Sigma$ be the covariance matrix for $\bhtau$, a diagonal matrix with $\Sigma_{ii}= 1/\left({\sigma^2_{1i}}/{n_{1i}} + {\sigma^2_{0i}}/{n_{0i}} - {\sigma^2_{\tau i}}/{n_i}\right)$. Noting that $W(I-H_Q)\Psi_Q(I-H_Q)W$ is symmetric,
\begin{align*}
B^2E[{S}^2_2(Q)\mid \cF, \cZ] &= tr(\Sigma W(I-H_Q)\Psi_Q(I-H_Q)W) + \bbtau^TW(I-H_Q)\Psi_Q(I-H_Q)W\bbtau.\end{align*}
The $\{i,i\}$ element of $\Sigma W(I-H_Q)\Psi_Q(I-H_Q)W$ is given by
\begin{align*}
(\Sigma W(I-H_Q)\Psi_Q(I-H_Q)W)_{ii} = w_i^2\left(\frac{\sigma^2_{1i}}{n_{1i}} + \frac{\sigma^2_{0i}}{n_{0i}} - \frac{\sigma^2_{\tau i}}{n_i}\right)\left(1 + \sum_{j\neq i}\frac{h_{Qij}^2}{(1-h_{Qjj})^2}\right)
\end{align*}
Recalling the form of $\var(\hat{\Delta}\mid \cF, \cZ)$ and noting that $(I-H_Q)\Psi_Q(I-H_Q)$ is positive semidefinite completes the proof.
\end{proof}

Propositions \ref{prop:Q1} and \ref{prop:Q2} illustrate that for any constant matrix $Q$ with $L<B$, the corresponding projection matrix can be utilized for conservative variance estimation in a finely stratified experiment through the estimators $S^2_1(Q)$ and $S^2_2(Q)$ defined in (\ref{eq:sq}) and (\ref{eq:sq2}). We will first illustrate that certain choices of $Q$ recover the standard variance estimator in a paired experiment when using $S^2_1(Q)$, and further suggest two conventional estimators for finely stratified experiments with varying block sizes. We will then show that the form of the bias expressions in Proposition 1 and 2 provides insight into choices for $Q$ which will provide improvements in variance estimation.
\subsection{Preliminary conservative variance estimators with equal and unequal block sizes}
Initially, let $\tilde{Q}_{1} = [\mathbf{e}, W\mathbf{e}-1]$ to be a $B\times 2$ matrix with a constant column along with a column corresponding to the centered weights (note that $B^{-1}\sum_{i=1}^Bw_i = 1)$. Define $Q_{1} = \tilde{Q}_{1} I_{2\times rank(\tilde{Q}_{1})}$, where $I_{k\times \ell}$ denotes a matrix of dimension $k\times \ell$ with ones on the diagonal and zeroes everywhere else; this removes the column $W\mathbf{e} - \mathbf{e}$ when block sizes are equal to avoid rank deficiency. We will now consider the implications of choosing $Q = Q_1$ in (\ref{eq:sq}) and (\ref{eq:sq2}) to define a conservative variance estimator.

When block sizes are equal $Q_1 = \mathbf{e}$, and hence the diagonal elements of the hat matrix associated with $Q_1$ equal $1/B$ for each observation. The variance estimator then takes on the simplified form
\begin{align*}
S^2_1(Q_{1})&= \frac{1}{B(B-1)}\sum_{i=1}^B(\hat{\tau}_i - \hat{\Delta})^2.\end{align*} In the case of matched pairs, this estimator is simply the sample variance of the observed paired differences divided by the number of pairs, hence recovering the classical variance estimator. Proposition 1 of \citet{ima08} for matched pairs can be viewed as a special case of our Proposition 1 with $Q = \mathbf{e}$. This also indicates that an additive treatment effect model implies unbiasedness of the estimator $S^2(Q_{1})$ for $\var(\hat{\Delta} \mid \cF, \cZ)$ in a finely stratified experiments with equal block sizes, even if the design is not paired. With equal block sizes, we have that $S^2_2(Q_{1})\geq S^2_1(Q_{1})$, meaning that the estimator $S_1^2(Q_{1})$ should always be preferred in this case.

With unequal block sizes, the $i^{th}$ diagonal elements of the hat matrix associated with $Q_{1}$ is $1/B + (w_i - 1)^2/\sum_{i=1}^B(w_i-1)^2$. Since the diagonal elements of the hat matrix depend on $w_i$, the estimator $S^2_{1}(Q_{1})$ will be a strict upper bound in expectation for $\var(\hat{\Delta} \mid \cF, \cZ)$ under an additive treatment effect model for finite samples $\bar{\tau}_i = 0$ for all $i$. So long as $(w_i-1)^2/\sum_{i=1}^B(w_i-1)^2\rightarrow 0$ for all $i$ as $B\rightarrow\infty$, the estimator $S^2_1(Q_{1})$ and $S^2_2(Q_{1})$ will both be asymptotically unbiased for $\var(\hat{\Delta} \mid \cF, \cZ)$ under an additive treatment effect (this condition would hold under the assumption that the block sizes are bounded, for example). In the unequal block case there is no longer a consistent ordering between $S_1^2(Q_{1})$ and $S_2^2(Q_{1})$, but the discrepancies tend to be minor: as will be demonstrated Theorem \ref{thm:2}, appropriately scaled versions of these two estimators converge in probability to the same limit under mild conditions.

\subsection{Improved variance estimation through exploiting effect modification}

For each block $i$, let $\mathbf{\bar{\bx}}_i$ be the vector of length $K$ whose $k^{th}$ entry is the average of the $k^{th}$ covariate for the individuals in block $i$, i.e. $\bar{x}_{ik} = n_{i}^{-1}\sum_{j=1}^{n_{i}}x_{ijk}$. Let $\bar{X}$ be the $B \times K$ matrix whose $k^{th}$ column contains $(\bar{x}_{1k}, \bar{x}_{2k},...,\bar{x}_{Bk})^T$ for $k=1,...,K$.  Let $M = (I-H_{Q_1})W\bar{X}$ be the weighted covariate means adjusted for $Q_{1}$. Let $Q_{2} = [Q_1, M]$. While the mutual orthogonality of $M$, $\bone$, and $W\bone-\bone$ within $Q_2$ is not required at this point, it facilitates forthcoming illustrations and makes clearer certain connections to heteroskedasticity consistent standard errors. Let $S^2_{1}(Q_{2})$ and $S^2_{2}(Q_{2})$ be the variance estimators corresponding to setting $Q = Q_{2}$ in (\ref{eq:sq}) and (\ref{eq:sq2}). 

To understand the potential benefits of the variance estimator $S^2_{1}(Q_2)$, note that from Proposition 1 the bias in $BS^2_{1}$ is $B^{-1}\mu^TW(I-H_{Q})W\mu$. Under mild regularity conditions described in \S \ref{sec:asy}, the diagonal elements of the hat matrix associated with $Q_{2}$ tend to 0 implying that $\mu_i \approx \bar{\tau}_i$ in sufficiently large samples. We can then think of $B^{-1}\mu^TW(I-H_{Q_{2}})W\mu$ as, approximately,the mean squared error from a regression of the weighted treatment effects, $W\bbtau$, on the weighted covariates, along with an intercept and a column for the block sizes. If the matrix $W\bar{X}$ contains covariates which are predictive of the treatment effects in different blocks, ${S}^2_{1}(Q_2)$ could yield a substantially less conservative estimator for $\var(\hat{\Delta} \mid \cF, \cZ)$ than the estimator $S^2_{1}(Q_1)$, which does not exploit potential effect modification.

For $S^2_{2}(Q_2)$, there is an additional connection to commonly employed standard error estimators in linear regression. In fact, since $Q_{2}$ was constructed such that $\bone$ is orthogonal to all other columns of $Q_2$, $S^2_{2}(Q_2)$ exactly corresponds to the square of the HC3 heteroskedasticity consistent standard error for the intercept column in a regression of $W\bhtau$ on $Q_{2}$ \citep{mac85, lon00}. The bias term for $BS^2_{2}(Q_2)$ is then approximately equal to $B$ times the HC3 variance for the intercept column of a regression of $W\bbtau$ on $Q_{2}$, which is itself a close approximation to the mean squared error from a regression of the weighted treatment effects $W\bbtau$ on $Q_{2}$.

Importantly, Propositons 1 and 2 make no assumption about the truth of the linear model generating the projection matrix $H_{Q}$. While the magnitude of the improvement from using ${S}^2_{\ell}(Q_2)$ instead of ${S}_{\ell}(Q_1)$ for $\ell=1,2$ depends on how well the weighted covariate means $W\bar{X}$ predict $W\bar{\btau}$, any choice of Q in (\ref{eq:sq}) or (\ref{eq:sq2}) will yield a variance estimator which is conservative in expectation for $\var(\hat{\Delta}\mid \cF, \cZ)$. As will now be shown, under mild conditions ${S}^2_{\ell}(Q_2)$  is asymptotically no worse than ${S}^2_{\ell}(Q_1)$ for $\ell=1,2$ regardless of the functional form describing the relationship between the observed covariates and the stratum-specific treatment effects. Further, both $B(S^2_1(Q_1) - S^2_2(Q_1))$ and $B(S^2_1(Q_2) - S^2_2(Q_2))$ converge in probability to zero.

\subsection{Asymptotic performance of variance estimators}\label{sec:asy}
We now give sufficient conditions which enable asymptotically valid inference for $\bar{\Delta}$ to proceed using $S^2_{\ell}(Q_1)$ and $S^2_{\ell}(Q_2)$ for $\ell=1,2$. In so doing, we will also quantify the potential improvements from exploiting effect modification through the variance estimator. The finite population asymptotics presented herein embed a given experiment with $B$ strata within an infinite sequence of experiments with increasingly many blocks. To reflect their changing values along this sequence, quantities such as $\bar{\Delta}$, $M$ and $W$ should be subscripted by $B$ for precision of notation; we omit this, trading precision for readability. Let $H_M = M(M^TM)^{-1}M^T$ be the hat matrix associated with $M$ as defined in the previous section, and consider the following regularity conditions.

\begin{condition}\textit{(Bounded Block Sizes)}
There exists a $C_1 < \infty$ such that $n_i < C_1$ for all $i$ and all $B$ as $B\rightarrow \infty$.
\end{condition}
\begin{condition}\textit{(Bounded Fourth Moments)}. \label{cond:1} There exists a $C_2 <\infty$ such that, for all $B$,\\ $B^{-1}\sum_{i=1}^B\sum_{j=1}^{n_i}w_i^4r_{1ij}^4/n_i < C_2$, $B^{-1}\sum_{i=1}w_i^4r_{0ij}^4/n_i < C_2$, $B^{-1}\sum_{i=1}w_i^4\bar{\tau}_i^4 < C_2$ and \\$B^{-1}\sum_{i=1}^B\sum_{j=1}^{n_i}w_i^4x_{ijk}^4/n_i < C_2$ for $k=1,..,K$. 

\end{condition}
\begin{condition}\textit{(Existence of Population Moments)}. \label{cond:2}  
\begin{itemize} \item $B^{-1}\sum_{i=1}^Bw_i\bar{\tau}_i$, $B^{-1}\sum_{i=1}^Bw_i^2\bar{\tau}_i$, $B^{-1}\sum_{i=1}^Bw_i^2\bar{\tau}_i^2$ and $B^{-1}\sum_{i=1}^Bw_i^2({\sigma^2_{1i}}/{n_{1i}} + {\sigma^2_{0i}}/{n_{0i}}$ $- {\sigma^2_{\tau i}}/{n_i})$ converge to finite limits as $B\rightarrow \infty$. 
\item $B^{-1}\sum_{i=1}^Bw_i \bar{\tau}_i{m}_{ik}$ converges to a finite limit for $k =1,...,K$ as $B\rightarrow \infty$. Let $\boeta_M$ be the vector of length $k$ containing these limits, i.e. $\eta_{Mk} = \lim_{B\rightarrow\infty}B^{-1}\sum_{i=1}^Bw_i \bar{\tau}_i{m}_{ik}$.
\item $B^{-1}M^TM$ converges to a finite, invertible matrix as $B\rightarrow \infty$. Call this limit $\Sigma_{M}$.
\end{itemize}
\end{condition}

Let $\bbeta_M = \Sigma_M^{-1}\boeta_M$. The following theorems illustrate that $S^2_{\ell}(Q_1)$ and $S^2_\ell(Q_2)$ for $\ell=1,2$ can all be used to conduct asymptotically conservative inference for the sample average treatment effect, $\bar{\Delta}$. After establishing asymptotic normality, we demonstrate that inference using $S^2_{\ell}(Q_2)$ will be no less powerful than that conducted using $S^2_{\ell}(Q_1)$ for $\ell=1,2$.
\begin{theorem}\label{thm:1}
Under Conditions 1-3 and conditional on $\cF$ and $\cZ$, 
\begin{align*}\sqrt{B}(\hat{\Delta} - \bar{\Delta}) &\overset{d}{\rightarrow}\mathcal{N}\left(0, B^{-1}\sum_{i=1}^Bw_i^2\left(\frac{\sigma^2_{1i}}{n_{1i}} + \frac{\sigma^2_{0i}}{n_{0i}} - \frac{\sigma^2_{\tau i}}{n_i}\right)\right).\end{align*}
\end{theorem}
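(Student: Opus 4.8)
The plan is to recognize $\sqrt{B}(\hat{\Delta} - \bar{\Delta})$ as a normalized sum of independent, mean-zero contributions, one per block, and to invoke a triangular-array central limit theorem. First I would write
\begin{align*}
\sqrt{B}(\hat{\Delta} - \bar{\Delta}) = B^{-1/2}\sum_{i=1}^B w_i(\hat{\tau}_i - \bar{\tau}_i) =: \sum_{i=1}^B \xi_{Bi},
\end{align*}
where $\xi_{Bi} = B^{-1/2} w_i(\hat{\tau}_i - \bar{\tau}_i)$. Because the $B$ blocks are randomized independently, the summands $\xi_{B1},\dots,\xi_{BB}$ are mutually independent conditional on $\cF$ and $\cZ$ for each fixed $B$. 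Each has mean zero by the unbiasedness of $\hat{\tau}_i$ for $\bar{\tau}_i$, and conditional variance $B^{-1}w_i^2(\sigma^2_{1i}/n_{1i} + \sigma^2_{0i}/n_{0i} - \sigma^2_{\tau i}/n_i)$ by Theorem 6.2 of \citet{imb15}. Summing over the independent blocks, the total variance is exactly $V_B := B^{-1}\sum_{i=1}^B w_i^2(\sigma^2_{1i}/n_{1i} + \sigma^2_{0i}/n_{0i} - \sigma^2_{\tau i}/n_i)$, which converges to a finite limit $V$ by Condition 3 and coincides with the variance of the claimed limiting law.

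Since the potential outcomes themselves change as blocks are appended along the embedding sequence, this is a genuine triangular array and a classical i.i.d.\ central limit theorem does not apply; I would therefore use the Lindeberg--Feller theorem, verifying its hypothesis through the stronger but more convenient Lyapunov condition with fourth moments. Assuming $V>0$, it suffices to show
\begin{align*}
\frac{1}{V_B^2}\sum_{i=1}^B \E\!\left[\xi_{Bi}^4 \mid \cF, \cZ\right] = \frac{1}{V_B^2}\,B^{-2}\sum_{i=1}^B w_i^4\,\E\!\left[(\hat{\tau}_i - \bar{\tau}_i)^4 \mid \cF, \cZ\right] \to 0.
\end{align*}
As $V_B \to V > 0$, the denominator is bounded away from zero and the task reduces to bounding the numerator.

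The key step is to control the within-block fourth moment using Condition 1. Because $\hat{\tau}_i$ is a difference of two averages of the block potential outcomes, $|\hat{\tau}_i - \bar{\tau}_i| \le 2(\max_j |r_{1ij}| + \max_j |r_{0ij}|)$, so that $\E[(\hat{\tau}_i - \bar{\tau}_i)^4 \mid \cF, \cZ] \le C \sum_{j=1}^{n_i}(r_{1ij}^4 + r_{0ij}^4)$ for an absolute constant $C$. Substituting and using $n_i \le C_1$ gives
\begin{align*}
B^{-2}\sum_{i=1}^B w_i^4\,\E[(\hat{\tau}_i - \bar{\tau}_i)^4 \mid \cF, \cZ] \le C\,C_1\,B^{-1}\cdot B^{-1}\sum_{i=1}^B\sum_{j=1}^{n_i} w_i^4 (r_{1ij}^4 + r_{0ij}^4)/n_i,
\end{align*}
and the bracketed average is bounded by $2C_2$ via Condition 2. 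Hence the numerator is $O(B^{-1}) \to 0$, the Lyapunov condition holds, and the Lindeberg--Feller theorem yields $\sum_i \xi_{Bi} \overset{d}{\rightarrow} \mathcal{N}(0, V)$, which is the assertion; the degenerate case $V = 0$ follows separately from Chebyshev's inequality, giving convergence to a point mass at $0$.

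The main obstacle I anticipate is the within-block fourth-moment bound: one must pass from the moments of $\hat{\tau}_i$ under sampling without replacement to the fixed fourth moments of the potential outcomes. The bounded-block-size condition is what makes this clean, since it renders $\hat{\tau}_i - \bar{\tau}_i$ a bounded function of finitely many assignments so that the crude maximum bound suffices; without it one would need a more delicate analysis of the moments of the without-replacement difference-in-means. Care is also required to keep every expectation conditional on $\cF$ and $\cZ$, reflecting that the treatment assignment is the sole source of randomness.
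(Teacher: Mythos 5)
Your proposal is correct and follows essentially the same route as the paper: both decompose $\sqrt{B}(\hat{\Delta}-\bar{\Delta})$ into independent mean-zero block contributions and verify the fourth-moment Lyapunov condition for a triangular array, using Condition 1 (bounded block sizes) and Condition 2 (bounded fourth moments) to control $\E[(\hat\tau_i-\bar\tau_i)^4\mid\cF,\cZ]$ and Condition 3 to ensure the normalizing variance has a finite limit. The only differences are cosmetic (you bound the centered fourth moment via a crude maximum over the block, while the paper splits $|w_i\hat\tau_i-w_i\bar\tau_i|^4$ with a standard moment inequality and reuses its Lemma 3 bound), and you are slightly more careful in flagging the degenerate case $V=0$, which the paper implicitly excludes.
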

\begin{theorem}\label{thm:2}
Under Conditions 1-3 and conditional on $\cF$ and $\cZ$, then for $\ell=1,2$,
\begin{align*}
BS^2_{\ell}(Q_1) - \var(\sqrt{B}\hat{\tau}\mid \cF, \cZ)&\overset{p}{\rightarrow}  \underset{B\rightarrow\infty}{\lim}  \frac{1}{B}\bar{\btau}^TW(I-H_{Q_{1}})W\bar{\btau};\\
BS^2_{\ell}(Q_2) - \var(\sqrt{B}\hat{\tau}\mid \cF, \cZ)&\overset{p}{\rightarrow}  \underset{B\rightarrow\infty}{\lim}  \frac{1}{B}\bar{\btau}^TW(I-H_{Q_{2}})W\bar{\btau}\\
&= \underset{B\rightarrow\infty}{\lim}  \frac{1}{B}\bar{\btau}^TW(I-H_{Q_{1}})W\bar{\btau} - \beta_M^T\Sigma_M\beta.\\
\end{align*}
\end{theorem}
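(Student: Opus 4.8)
The plan is to treat each scaled estimator $BS^2_\ell(Q)$ as a quadratic form in the block-level estimators $\hat\tau_1,\dots,\hat\tau_B$, which are mutually \emph{independent} conditional on $\cF$ and $\cZ$ because the blocks are randomized independently, with $\E[\hat\tau_i\mid\cF,\cZ]=\bar\tau_i$ and $\var(\hat\tau_i\mid\cF,\cZ)=\sigma^2_{1i}/n_{1i}+\sigma^2_{0i}/n_{0i}-\sigma^2_{\tau i}/n_i$. I would decompose $BS^2_\ell(Q)$ into its conditional mean plus a mean-zero fluctuation and apply Chebyshev's inequality. Since $\var(\sqrt B\,\hat\Delta\mid\cF,\cZ)$ is a deterministic constant, it suffices to establish, for each $\ell\in\{1,2\}$ and each $Q\in\{Q_1,Q_2\}$: (i) the centered mean $\E[BS^2_\ell(Q)\mid\cF,\cZ]-\var(\sqrt B\,\hat\Delta\mid\cF,\cZ)$ converges to the claimed limit, and (ii) $\var(BS^2_\ell(Q)\mid\cF,\cZ)\to0$.

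A preliminary step feeding both parts is to show $\max_i h_{Qii}\to0$. For $Q_1$ this follows from the explicit formula $h_{Q_1 ii}=1/B+(w_i-1)^2/\sum_i(w_i-1)^2$ together with Condition 1, which bounds the $w_i$ and forces each term to be $O(1/B)$. For $Q_2=[Q_1,M]$, the orthogonality of $M$ and $Q_1$ gives $H_{Q_2}=H_{Q_1}+H_M$, and $h_{Mii}=m_i^T(M^TM)^{-1}m_i\le\max_i\|m_i\|^2\,\|(M^TM)^{-1}\|$; Condition 3 makes $B^{-1}M^TM\to\Sigma_M$ invertible so $\|(M^TM)^{-1}\|=O(1/B)$, while Condition 2 (via Jensen applied to $\bar x_{ik}$) gives $B^{-1}\sum_i\|m_i\|^4=O(1)$ and hence $\max_i\|m_i\|^2=o(B)$, so $h_{Mii}\to0$. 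With $\max_i h_{Qii}\to0$ in hand, part (i) is read off from Propositions 1 and 2: for $\ell=1$ uniform smallness of $h_{Qii}$ lets me replace $\mu_i=\bar\tau_i/\sqrt{1-h_{Qii}}$ by $\bar\tau_i$ at negligible cost, so $B^{-1}\bmu^TW(I-H_Q)W\bmu$ and $B^{-1}\bbtau^TW(I-H_Q)W\bbtau$ share the same limit, which exists by Condition 3; for $\ell=2$ I additionally bound the Proposition 2 cross term using the projection identity $\sum_{j\ne i}h_{Qij}^2=h_{Qii}-h_{Qii}^2\le h_{Qii}$, making it $O(\max_i h_{Qii})\to0$, and replace $\Psi_Q$ by $I$ since $\|\Psi_Q-I\|\to0$ and $B^{-1}\|W\bbtau\|^2=O(1)$. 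Thus all four estimators share the single limit $\lim B^{-1}\bbtau^TW(I-H_Q)W\bbtau$ for each fixed $Q$.

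For the fluctuation in (ii), I would write $BS^2_\ell(Q)=B^{-1}\bhtau^TA\bhtau$ for the appropriate symmetric $A$ (namely $DW(I-H_Q)WD$ with $D=\mathrm{diag}((1-h_{Qii})^{-1/2})$, or $W(I-H_Q)\Psi_Q(I-H_Q)W$) and apply the standard variance formula for a quadratic form in independent coordinates, producing a diagonal contribution weighted by fourth central moments, an off-diagonal contribution proportional to $\sum_{i\ne j}A_{ij}^2$, and a mean-coupling contribution controlled by Cauchy--Schwarz. The entries of $A$ are $O(1)$ because the $w_i$, the factors $(1-h_{Qii})^{-1}$, and the entries of $I-H_Q$ are uniformly bounded, so $B^{-1}A$ has entries $O(1/B)$; Condition 2 makes the fourth central moments of $\hat\tau_i$ uniformly bounded on average; and the off-diagonal entries are controlled by $\sum_{i,j}h_{Qij}^2=\mathrm{tr}(H_Q)=\mathrm{rank}(Q)$, a fixed constant. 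The mean-coupling term is handled by bounding $\|A\bbtau\|^2$ by a constant multiple of $\bmu^TW(I-H_Q)W\bmu=O(B)$. Collecting orders gives $\var(BS^2_\ell(Q)\mid\cF,\cZ)=O(1/B)\to0$, which with (i) and Chebyshev yields both displayed probability limits.

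Finally, for the closed-form reduction I would use $H_{Q_2}=H_{Q_1}+H_M$ to write the difference of limits as $\lim B^{-1}\bbtau^TWH_MW\bbtau$. Expanding $H_M=M(M^TM)^{-1}M^T$ and setting $u_B=B^{-1}M^TW\bbtau$ and $V_B=B^{-1}M^TM$, this quantity equals $u_B^TV_B^{-1}u_B$, which converges to $\boeta_M^T\Sigma_M^{-1}\boeta_M$ by Condition 3; since $\bbeta_M=\Sigma_M^{-1}\boeta_M$, this equals $\bbeta_M^T\Sigma_M\bbeta_M$, giving the stated identity. I expect the main obstacle to be part (ii): carrying out the quadratic-form variance bound cleanly requires tracking several moment and hat-matrix sums at once, and it is precisely here that all three regularity conditions---bounded block sizes, bounded fourth moments, and the trace identity $\sum_{i,j}h_{Qij}^2=\mathrm{rank}(Q)$---must be used in concert.
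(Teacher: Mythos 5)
Your argument is correct, but it follows a genuinely different route from the paper's. You keep each $BS^2_{\ell}(Q)$ intact as a single quadratic form in the independent block estimators $\hat{\tau}_i$, read its conditional mean off Propositions \ref{prop:Q1} and \ref{prop:Q2} (showing the bias terms collapse to $\lim_B B^{-1}\bbtau^TW(I-H_Q)W\bbtau$ once the leverages vanish), and then kill the fluctuation by the general variance formula for a quadratic form in independent coordinates, controlling the off-diagonal contribution via $\sum_{i,j}h_{Qij}^2=\mathrm{rank}(Q)$ and the diagonal and mean-coupling contributions via the fourth-moment condition, finishing with Chebyshev. The paper instead exploits the mutual orthogonality of $\bone$, $W\bone-\bone$ and $M$ to decompose $B^{-1}\y^TW(I-H_{Q_2})W\y$ exactly into $B^{-1}\y^TWW\y$ minus the two low-rank pieces $B^{-1}\y^TWQ_1(Q_1^TQ_1)^{-1}Q_1^TW\y$ and $B^{-1}\y^TWM(M^TM)^{-1}M^TW\y$: the first piece converges by a law of large numbers for $B^{-1}\sum_i w_i^2\hat{\tau}_i^2$ (Lemma 3), and each projection piece is a fixed-dimensional quadratic form in the sample averages $B^{-1}Q_1^TW\y$ and $B^{-1}M^TW\y$, which converge by Lemma 1, with Lemma 2 (vanishing leverages) handling the passage between $\y$ and $\bhtau$. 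The paper's route thus reduces everything to Chebyshev bounds on a handful of weighted sums and never needs the full quadratic-form variance formula; yours is more self-contained, recycling the exact bias expressions already proved, at the price of the heavier fourth-moment bookkeeping you flag. Both identify the gap $\bbeta_M^T\Sigma_M\bbeta_M$ identically via $H_{Q_2}=H_{Q_1}+H_M$ and Condition 3. One small caution: your claim that $\max_i h_{Q_1ii}\to 0$ "follows from Condition 1" is not quite right for the weight column, since it requires $(w_i-1)^2/\sum_j(w_j-1)^2\to 0$, which bounded block sizes alone do not literally force (a single block of deviant size keeps that ratio at $1$); the paper states this as a separate assumption in \S 4.2, and you should do the same.
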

\begin{corollary}
For $\ell=1,2$,
\begin{align*}
BS^2_{\ell}(Q_1) - BS^2_{\ell}(Q_2) &\overset{p}{\rightarrow} \bbeta_M^T\Sigma_M\bbeta_M \geq 0. \end{align*} 

\end{corollary}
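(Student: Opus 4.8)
The plan is to obtain this corollary directly from Theorem \ref{thm:2}, which has already supplied the two probability limits needed. The one additional idea is to write the quantity of interest as a difference of the two centered statistics appearing in Theorem \ref{thm:2}, so that the common deterministic centering term $\var(\sqrt{B}\hat{\Delta}\mid \cF, \cZ)$ cancels:
\[
BS^2_{\ell}(Q_1) - BS^2_{\ell}(Q_2) = \left[BS^2_{\ell}(Q_1) - \var(\sqrt{B}\hat{\Delta}\mid \cF, \cZ)\right] - \left[BS^2_{\ell}(Q_2) - \var(\sqrt{B}\hat{\Delta}\mid \cF, \cZ)\right].
\]

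First I would apply Theorem \ref{thm:2} to each bracketed term, giving that the first converges in probability to $\lim_{B\to\infty} B^{-1}\bbtau^TW(I-H_{Q_1})W\bbtau$ and the second to $\lim_{B\to\infty} B^{-1}\bbtau^TW(I-H_{Q_2})W\bbtau$. Since convergence in probability is preserved under the continuous map $(a,b)\mapsto a-b$, the displayed difference converges in probability to the difference of these two limits. Invoking the explicit evaluation in the last line of Theorem \ref{thm:2}, namely $\lim_{B\to\infty} B^{-1}\bbtau^TW(I-H_{Q_2})W\bbtau = \lim_{B\to\infty} B^{-1}\bbtau^TW(I-H_{Q_1})W\bbtau - \bbeta_M^T\Sigma_M\bbeta_M$, the two terms involving $H_{Q_1}$ cancel and the difference collapses to $\bbeta_M^T\Sigma_M\bbeta_M$, as claimed. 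For the inequality, I would note that $\Sigma_M$ is positive semidefinite as the limit of the Gram matrices $B^{-1}M^TM$, and is in fact positive definite since Condition \ref{cond:2} asserts it is invertible; hence the quadratic form $\bbeta_M^T\Sigma_M\bbeta_M$ is nonnegative, with strict positivity unless $\bbeta_M = \Sigma_M^{-1}\boeta_M = 0$.

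I do not anticipate a substantive obstacle, since the corollary is an immediate algebraic consequence of Theorem \ref{thm:2}; the result is essentially a restatement of its two limits compared head to head. The only point meriting care is the cancellation step: because $\var(\sqrt{B}\hat{\Delta}\mid \cF, \cZ)$ is a single deterministic sequence subtracted from each statistic, differencing leaves its two copies to cancel exactly, so the algebra-of-limits argument applies to $BS^2_{\ell}(Q_1) - BS^2_{\ell}(Q_2)$ directly and does not even require that the centering term itself converge. This also makes transparent the intended interpretation, that the asymptotic gain from exploiting effect modification equals $\bbeta_M^T\Sigma_M\bbeta_M$, the variance explained by projecting the weighted block-level treatment effects onto the weighted covariate means.
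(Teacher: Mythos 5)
Your proposal is correct and matches the paper's treatment: the corollary is obtained exactly as you describe, by differencing the two centered limits of Theorem \ref{thm:2} so that the common term $\var(\sqrt{B}\hat{\Delta}\mid \cF, \cZ)$ cancels and the explicit evaluation of the $Q_2$ limit leaves $\bbeta_M^T\Sigma_M\bbeta_M$, which is nonnegative since $\Sigma_M$ is a (positive definite) limit of Gram matrices. The paper gives no separate argument beyond this, so your write-up is if anything more explicit about the algebra-of-limits step.
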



The proofs are deferred to the appendix. The above results, in concert with Propositions 1 and 2, justify multiple means by which inference can be conducted for the sample average treatment effect, $\bar{\Delta}$, in finely stratified experiments. The results validate new standard error estimators for inference on the $SATE$ in finely stratified experiments while using classical weighted difference-in-mean estimator. Furthermore, these results highlight how effect modification can be leveraged to reduce the degree of conservativeness of the performed inference. As Corollary 1 demonstrates, standard errors derived by including suitably weighted average values for covariates within blocks are, asymptotically, never worse than those derived without including covariate information.

\section{Consonant and dissonant super-population formulations}
\subsection{Population-level causal estimands}
The preceding results make no assumptions about the manner by which individuals were selected for inclusion into the block-randomized experiment in the first place; that is, they neither require nor postulate the existence of a larger population from which individuals were drawn. The target of estimation, the sample average treatment effect, attests merely to the treatment effect for individuals in the sample at hand, and the act of randomization provides a reasoned basis for making probabilistic statements \citep{fis35}. That being said, it is sometimes desired to postulate that individuals in the study at hand were in fact draws from a super-population, and to perform inference on the average treatment effect within that super-population. 

\subsection{Conditional average treatment effect (CATE)}

As an initial super-population extension, suppose we consider the covariates $\bx_{ij}$ and the block sizes $\{n_1,...,n_B\}$ as fixed and consider the pairs of potential outcomes $(r_{1ij}, r_{0ij})$ as having arisen through the following sampling mechanism.
\begin{align*}
(r_{1ij}, r_{0ij}) &= (f_{1i}(\bx_{ij}), f_{0i}(\bx_{ij})) + (\epsilon_{1ij}, \epsilon_{0ij}),
\end{align*} where $(\epsilon_{1ij}, \epsilon_{0ij})$ are drawn from an arbitrary distribution with mean $(0,0)$ and block-specific variance-covariance matrix $\Sigma_{i\epsilon}$. Let $f_{1ij} = f_{1i}(\bx_{ij})$, and let $f_{0ij} = f_{0i}(\bx_{ij})$. Let $\mathcal{C} = \{\bx_{ij}\}$ be the set containing the covariates for all individuals. Within this super-population abstraction, the conditional average treatment effect, or $CATE$, in a finely stratified experiment is defined as
\begin{align}\label{eq:CATE}
\bar{\Delta}^{(C)} = \frac{1}{N}\sum_{i=1}^B\sum_{j=1}^{n_{i}}(f_{1ij} - f_{0ij})
\end{align}

Let $\bar{f}_i = n_i^{-1}\sum_{j=1}^{n_{i}}(f_{1ij} - f_{0ij})$, and let $\bar{\mathbf{f}} = (\bar{f}_1,..., \bar{f}_B)^T$. Note that (\ref{eq:CATE}) reflects the view of the covariates as fixed, in much the same way that conventional least squares theory operates under the assumption of fixed covariates. The classical unbiased estimator for the overall conditional average treatment effect remains the weighted difference-in-means estimator given in (\ref{eq:dim}). The true variance for this estimator is inflated, as unlike with the sample average treatment effect we no longer condition on the potential outcomes in each block. Nonetheless, we now demonstrate the variance estimators $S^2_1(Q)$ given in (\ref{eq:sq}) and $S^2_2(Q)$ given in (\ref{eq:sq2}) remain conservative estimators in expectation for $\var(\hat{\Delta} \mid \mathcal{C}, \cZ)$.

\begin{proposition}\label{prop:QCATE}
If $Q$ is constant across all elements of $\Omega$:
\begin{align*}\E[S^2_1(Q) \mid \cC, \cZ] - \var(\hat{\Delta}\mid \cC, \cZ)
&=  \frac{1}{B^2}\mathbf{g}^TW(I-H_Q)W\mathbf{g}\geq 0,
\end{align*}
where $\mathbf{g}$ is a vector of length $B$ with $g_i = ({1-h_{Qii}})^{-1/2}\bar{f}_i$. Further, \begin{align*}&\E[S^2_{2}(Q) \mid \cC, \cZ] - \var(\hat{\Delta}\mid \cC, \cZ)\\
&=  \frac{1}{B^2}\sum_{i=1}^Bw_i^2\var(\hat{\tau}_i\mid \cC, \cZ)\sum_{j\neq i}\frac{h_{Qij}^2}{(1-h_{Qjj})^2} + \frac{1}{B^2}\mathbf{f}^TW(I-H_Q)\Psi_Q(I-H_Q)W\mathbf{f} \geq 0
\end{align*}
\end{proposition}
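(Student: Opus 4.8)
The plan is to reduce both claims to the quadratic-form identities already proved in Propositions \ref{prop:Q1} and \ref{prop:Q2}. Those proofs never used any feature of $\cF$ beyond three structural facts about the conditional law of $\bhtau$: its mean vector, the fact that its conditional covariance is diagonal, and the resulting block-additive form of $\var(\hat{\Delta}\mid\cdot)$. I would therefore recompute exactly these three facts with $\cC,\cZ$ in place of $\cF,\cZ$, and then apply the identical algebra. The only genuinely new element is the bookkeeping across the two layers of randomness, the randomization $\bZ$ and the super-population draws $(\epsilon_{1ij},\epsilon_{0ij})$, which I handle by iterated expectation.

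First I would establish the conditional moments of $\hat\tau_i$ under $\cC,\cZ$. Using the tower property together with the $\cF$-conditional unbiasedness of $\hat\tau_i$ for $\bar\tau_i$,
\begin{align*}
\E[\hat\tau_i\mid\cC,\cZ]=\E\big[\E[\hat\tau_i\mid\cF,\cZ]\mid\cC,\cZ\big]=\E[\bar\tau_i\mid\cC,\cZ]=\bar{f}_i,
\end{align*}
since the errors have mean zero. Hence $\E[y_i\mid\cC,\cZ]=\bar f_i/\sqrt{1-h_{Qii}}=g_i$ and $\E[\bhtau\mid\cC,\cZ]=\mathbf{f}$, so that $\mathbf{g}$ and $\mathbf{f}$ occupy precisely the roles played by $\bmu$ and $\bbtau$ in the earlier proofs. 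Because the blocks are independent and the error vectors are drawn independently across blocks, $\hat\tau_i$ and $\hat\tau_j$ are uncorrelated for $i\neq j$ given $\cC,\cZ$; the conditional covariance of $\bhtau$ is therefore diagonal with entries $\var(\hat\tau_i\mid\cC,\cZ)$, and the same cross-block independence yields $\var(\hat{\Delta}\mid\cC,\cZ)=B^{-2}\sum_{i=1}^Bw_i^2\var(\hat\tau_i\mid\cC,\cZ)$.

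With these facts in hand both claims follow verbatim from the earlier arguments. For the first, writing $\Lambda_C$ for the diagonal covariance of $\by$ under $\cC,\cZ$, with $(\Lambda_C)_{ii}=\var(\hat\tau_i\mid\cC,\cZ)/(1-h_{Qii})$, the expected quadratic form is $tr(\Lambda_C W(I-H_Q)W)+\mathbf{g}^TW(I-H_Q)W\mathbf{g}$; the factor $1-h_{Qii}$ built into $y_i$ cancels the diagonal $(W(I-H_Q)W)_{ii}=w_i^2(1-h_{Qii})$, so the trace collapses to $\sum_i w_i^2\var(\hat\tau_i\mid\cC,\cZ)=B^2\var(\hat{\Delta}\mid\cC,\cZ)$ and leaves the stated bias $B^{-2}\mathbf{g}^TW(I-H_Q)W\mathbf{g}\geq 0$, nonnegative since $I-H_Q$ is a projection. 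For the second, letting $\Sigma_C$ be the diagonal covariance of $\bhtau$ under $\cC,\cZ$, the $\{i,i\}$ entry of $\Sigma_C W(I-H_Q)\Psi_Q(I-H_Q)W$ is $w_i^2\var(\hat\tau_i\mid\cC,\cZ)\big(1+\sum_{j\neq i}h_{Qij}^2/(1-h_{Qjj})^2\big)$, exactly as in the proof of Proposition \ref{prop:Q2} but with $\var(\hat\tau_i\mid\cC,\cZ)$; subtracting $B^2\var(\hat{\Delta}\mid\cC,\cZ)$ from the trace leaves the first bias term, while the mean contributes $\mathbf{f}^TW(I-H_Q)\Psi_Q(I-H_Q)W\mathbf{f}$, with nonnegativity following from positive semidefiniteness of $(I-H_Q)\Psi_Q(I-H_Q)$.

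The main obstacle is not algebraic, since the quadratic-form computations are inherited unchanged, but rather the careful separation of the two sources of randomness. The step requiring the most care is verifying the cross-block lack of correlation of $\hat\tau_i$ under $\cC,\cZ$ and the consequent block-additive form of $\var(\hat{\Delta}\mid\cC,\cZ)$, as this is precisely what makes the trace collapse in both parts; this rests on the modeling assumption that the error vectors are drawn independently (or at least uncorrelatedly) across blocks, and I would state that assumption explicitly before invoking it.
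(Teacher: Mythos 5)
Your proposal is correct and follows exactly the route the paper intends: the paper's own ``proof'' of this proposition is simply the remark that it is analogous to Propositions \ref{prop:Q1} and \ref{prop:Q2}, and your argument is precisely that analogy made explicit, replacing $\bmu$ and $\bbtau$ by $\mathbf{g}$ and $\mathbf{f}$ via iterated expectation over the error draws and rerunning the same trace and quadratic-form algebra with $\var(\hat{\tau}_i\mid\cC,\cZ)$ on the diagonal. Your explicit flagging of the cross-block independence of the error vectors, which the paper leaves implicit in this section, is a sensible addition rather than a deviation.
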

The proof is analogous to that of Propositions \ref{prop:Q1} and \ref{prop:Q2}.  The insights from Theorem 2 similarly extend variance estimation for the conditional average treatment effect: through using regression adjustments on the average \textit{level} of the covariates in a given block results in less conservative variance estimators, with the degree of improvement now dependent on the extent to which the average of the weighted covariates in a given block are able to predict $w_if_i$, the weighted conditional average treatment effect in a block given the covariate values. 

In the case of equal block sizes, if the stratum-level treatment effects are homoskedastic (i.e. $\var(\hat{\tau}_i\mid \cC, \cZ)$ is constant across all blocks), then we are also entitled to an additional variance estimator connected to $HC2$ standard errors. Let $\tilde{\Psi}_Q$ be a diagonal matrix whose $i^{th}$ diagonal element is $\tilde{\Psi}_{Qii} = 1/(1-h_{Qii})$, and define $S^2_3(Q)$ as
\begin{align}\label{eq:sq3}
{S}^2_{3}(Q)&=\frac{1}{B^2}\bhtau^TW(I-H_Q)\tilde{\Psi}_Q(I-H_Q)W\bhtau,
\end{align}

\begin{proposition}\label{prop:QCATE3}
If $Q$ is constant across all elements of $\Omega$, block sizes are equal (such that $W=I$), and $\var(\hat{\tau}_i\mid\cC, \cZ)$ is constant across blocks:
\begin{align*}\E[S^2_3(Q) \mid \cC, \cZ]- \var(\hat{\Delta}\mid \cC, \cZ)=  \frac{1}{B^2}\bar{\mathbf{f}}^T(I-H_Q)\tilde{\Psi}_Q(I-H_Q)\bar{\mathbf{f}} \geq 0.
\end{align*}
\end{proposition}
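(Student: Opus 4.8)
The plan is to mirror the quadratic-form argument used for Propositions \ref{prop:Q1} and \ref{prop:Q2}, specializing to $W=I$ and exploiting the homoskedasticity hypothesis to collapse the trace term exactly. First I would record the two moment facts about $\bhtau$ under the $CATE$ sampling mechanism. Because the randomization is independent of the noise $(\epsilon_{1ij},\epsilon_{0ij})$ and each has mean zero, a short calculation gives $\E[\hat\tau_i\mid\cC,\cZ]=\bar f_i$, so that $\E[\bhtau\mid\cC,\cZ]=\bar{\mathbf{f}}$. Since the $B$ blocks are independent, $\Cov(\bhtau\mid\cC,\cZ)$ is diagonal; the equal-variance hypothesis then forces every diagonal entry to equal a common value $\sigma^2=\var(\hat\tau_i\mid\cC,\cZ)$, so $\Cov(\bhtau\mid\cC,\cZ)=\sigma^2 I$.

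With $W=I$, write $A=(I-H_Q)\tilde\Psi_Q(I-H_Q)$, which is symmetric, so that $B^2 S^2_3(Q)=\bhtau^T A\,\bhtau$. Applying the standard identity for the expectation of a quadratic form,
\begin{align*}
B^2\E[S^2_3(Q)\mid\cC,\cZ] = \sigma^2\, tr(A) + \bar{\mathbf{f}}^T A\,\bar{\mathbf{f}}.
\end{align*}
The crux is the trace term. Using the cyclic property and the idempotence of the projection $(I-H_Q)$, I would reduce $tr(A)=tr((I-H_Q)\tilde\Psi_Q(I-H_Q))$ to $tr(\tilde\Psi_Q(I-H_Q))$. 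Because $\tilde\Psi_Q$ is diagonal with $\tilde\Psi_{Qii}=1/(1-h_{Qii})$ while the $i$th diagonal entry of $(I-H_Q)$ is exactly $1-h_{Qii}$, every diagonal entry of $\tilde\Psi_Q(I-H_Q)$ equals one, whence $tr(A)=B$. This cancellation is precisely the HC2 mechanism, and it is the only place the equal-variance assumption is used.

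To finish, I would note that under equal block sizes $w_i=1$ for all $i$, so $\var(\hat\Delta\mid\cC,\cZ)=B^{-2}\sum_{i=1}^B\var(\hat\tau_i\mid\cC,\cZ)=\sigma^2/B$. Dividing the display by $B^2$ and substituting $tr(A)=B$ gives $\E[S^2_3(Q)\mid\cC,\cZ]=\sigma^2/B+B^{-2}\bar{\mathbf{f}}^T A\,\bar{\mathbf{f}}$; subtracting $\var(\hat\Delta\mid\cC,\cZ)=\sigma^2/B$ then yields exactly the claimed bias $B^{-2}\bar{\mathbf{f}}^T(I-H_Q)\tilde\Psi_Q(I-H_Q)\bar{\mathbf{f}}$. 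Nonnegativity follows since $\tilde\Psi_Q$ is a positive-definite diagonal matrix and $(I-H_Q)$ is symmetric, so $A=(I-H_Q)^T\tilde\Psi_Q(I-H_Q)$ is positive semidefinite.

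The main obstacle is conceptual rather than computational: recognizing that the weighting $1/(1-h_{Qii})$ baked into $\tilde\Psi_Q$ is exactly what makes $tr(A)=B$ under homoskedasticity, so that the sampling-variance term is reproduced without bias and only the signal quadratic form in $\bar{\mathbf{f}}$ survives. Establishing the covariance structure $\Cov(\bhtau\mid\cC,\cZ)=\sigma^2 I$ relies on both block independence and the equal-variance hypothesis; were either to fail, the trace would not telescope and cross-block terms of the form $h_{Qij}^2/(1-h_{Qjj})$ analogous to those appearing in Proposition \ref{prop:Q2} would reappear.
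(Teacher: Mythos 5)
Your proposal is correct and follows essentially the same route as the paper: compute the expectation of the quadratic form $\bhtau^T(I-H_Q)\tilde\Psi_Q(I-H_Q)\bhtau$ using $\E[\bhtau\mid\cC,\cZ]=\bar{\mathbf{f}}$ and $\Cov(\bhtau\mid\cC,\cZ)=\sigma^2 I$, show the trace term collapses exactly to $B\sigma^2$, and conclude by positive semidefiniteness of the sandwich matrix. Your trace step (cyclicity plus idempotence of $I-H_Q$, then reading off unit diagonal entries of $\tilde\Psi_Q(I-H_Q)$) is a slightly cleaner packaging of the paper's element-wise computation, which instead expands the diagonal entries and invokes symmetry together with $\sum_{j\neq i}h_{Qij}^2=h_{Qii}(1-h_{Qii})$, but the underlying cancellation is identical.
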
 
The proof is deferred to the appendix. In the general case with across block heteroskedasticity, unequal block sizes, or when conducting inference on the the sample average treatment effect $S^2_{3}(Q)$ need not be conservative in expectation. It does, however, converge in probability to the same limiting value as $S^2_1(Q)$ and $S^2_2(Q)$, indicating that the prospect of anticonservative inference through $S^2_3(Q)$ may only be a realistic concern in small samples.  

These developments demonstrate that the modes of inference presented for the sample average treatment effect in \S 4 yield harmonious extensions to inference on the conditional average treatment effect. That is, hypothesis tests and confidence intervals for the sample average treatment can also be interpreted as hypothesis tests and confidence intervals for the conditional average treatment effect should the practitioner deem the super-population formulation.

\subsection{Population average treatment effect (PATE)}
As an alternative super-population formulation, suppose we now consider the block sizes $\{n_1,...,n_B\}$ as fixed, but the covariates within a given block, $\{\bx_{i1},...,\bx_{in_{i}}\}$ as random. We now consider the pair of potential outcomes $\{r_{1ij}, r_{0ij}\}$ as having arisen through the following model:
\begin{align*}
\bx_{ij}&= \bzeta_i + \varepsilon_{ij}\\
(r_{1ij}, r_{0ij}) \mid \bx_{ij} &= (f_{1i}(\bx_{ij}), f_{0i}(\bx_{ij})) + (\epsilon_{1ij}, \epsilon_{0ij}),
\end{align*} where $\bzeta_i$ are block-specific fixed effects, $\varepsilon_{ij}$ are $iid$ from some mean zero, finite variance distribution $G$, and the $(\epsilon_{1ij}, \epsilon_{0ij})$ are drawn $iid$ from an arbitrary distribution $F$ with mean $(0,0)$ and block-specific variance-covariance matrix $\Sigma_{i\epsilon}$. Within this super-population abstraction, the \textit{population} average treatment effect, or $PATE$, in a finely stratified experiment is defined as.
\begin{align}\label{eq:PATE}
\bar{\Delta}^{(P)} = \sum_{i=1}^B(n_i/N)\int (f_{1i}(\bzeta_i + \varepsilon_{ij}) - f_{0i}(\bzeta_i + \varepsilon_{ij}))dG(\varepsilon_{ij})
\end{align}

The classical weighted difference-in-means estimator $\hat{\Delta}$ remains an unbiased estimator for the population average treatment effect. \citet{ima08} consider this model in a paired experiment with $\bzeta_i = \bzeta_0$ for all $i$. Therein, they demonstrate not only that the average of the paired differences yields an unbiased estimator for the average population average treatment effect, but that the classical variance estimator for the difference-in-means, $S^2_P$, is an \textit{unbiased} estimator for $\var(\hat{\Delta}|\cZ)$ regardless of whether or not the underlying treatment effect is additive. 

It is here that we see the potential incongruity between inferential methods for the sample average treatment effect and for the population average treatment effect appear. The improvements presented herein empower the practitioner to use the average level of the covariates within a given block as a means to improve variance estimation when the sample or conditional average treatment effects are the targets of estimation. If the target is instead the population average treatment effect as formulated in this section, randomness in $\{\bx_{ij}\}$ renders these conclusions inapplicable. As an illustration, consider the expectation of $S^2_1(Q_2)$ within this super-population formulation.

\begin{align*} &\E[S^2_1(Q_2) \mid \cZ] = \E[\E[S^2_1(Q_2) \mid \cZ, \cC]]\\
&= \E[\var(\hat{\Delta} \mid \cZ, \cC)] + \frac{1}{B^2}\E[\mathbf{g}^TW(I-H_{Q_{2}})W\mathbf{g}\mid \cZ]\\
&=\var(\hat{\Delta} \mid \cZ) - \var\left(\frac{1}{N}\sum_{i=1}^B\sum_{j=1}^{n_{i}}(f_{1i} - f_{0i})\mid \cZ\right) + \frac{1}{B^2}\E[\mathbf{g}^TW(I-H_{Q_{2}})W\mathbf{g}\mid \cZ]\\
&=\var(\hat{\Delta} \mid \cZ) - \var\left(\frac{1}{N}\sum_{i=1}^B\sum_{j=1}^{n_{i}}(f_{1i} - f_{0i})\mid \cZ\right) + \frac{1}{B^2}\E[\mathbf{g}^TW(I-H_{Q_{1}})W\mathbf{g}\mid \cZ] \\&-\frac{1}{B^2}\E[\mathbf{g}^TWH_{M}W\mathbf{g}\mid \cZ] \\
&\approx \var(\hat{\Delta} \mid \cZ) + \frac{1}{B^2}\E[\mathbf{g}\mid \cZ]^TW(I-H_{Q_{1}})W\E[\mathbf{g}\mid \cZ] - \frac{1}{B^2}\E[\mathbf{g}^TWH_{M}W\mathbf{g}\mid \cZ],
\end{align*} where the approximation stems from ignoring the division by $\sqrt{1-h_{Q_{2}ii}}$ in \\$g_i = \bar{f}_i/\sqrt{1-h_{Q_{2}ii}}$ in the term $\E[\mathbf{g}^TW(I-H_{Q_{1}})W\mathbf{g}\mid \cZ]$, a safe approximation in large samples. The last line need not be greater than $\var(\hat{\Delta} \mid \cZ)$. For example, in the case where $\bzeta_i = \bzeta_0$ for all $i$, it will approximately equal $\var(\hat{\Delta} \mid \cF)- B^{-2}\E[\mathbf{g}^TWH_{M}W\mathbf{g}\mid \cZ]$, meaning that it provides an underestimate. In short, the implications of this derivation are that effect modification cannot be safely exploited in variance estimation when conducting inference for the $PATE$ as defined in (\ref{eq:PATE}), while it can be exploited for inference on the $SATE$ and $CATE$. Valid inference for the sample average treatment effect using the developments in  \S 4 may, or may not, be anti-conservative for inference on the population average treatment effect depending on whether or not the corresponding variance estimator was constructed using $M$, the average level of the covariates. As a consequence, $S^2_1(Q_{2})$ and $S^2_2(Q_2)$ cannot be relied upon to yield valid inference for the $PATE$.

The above derivation also illustrates that if the covariate means $\bar{X}$ were not present, the issue with potentially anti-conservative variance estimation disappears. That is, $S^2_{1}(Q_{1})$ and $S^2_2(Q_1)$ can safely be utilized when conducting inference on $PATE$ since these estimators do not exploit effect modification. Similar dissonance for variance estimates for population average treatment effects versus sample average treatment effects is also observed when conducting inference after regression adjustment in completely randomized experiments; compare, for example the suggested variance estimator of \citet{pit13} and \citet{ber13} to that of \citet{lin13}. 

\section{An exact test for additivity with power under linear effect modification}

The developments of the previous sections naturally lend themselves to a new test of the null hypothesis of an additive treatment effect model when the researcher suspects the presence of effect modification on the basis of observed covariates. Suppose we want to test the null hypothesis of an additive treatment effect model against the alternative that there is effect heterogeneity,
\begin{align*}
\mathbf{H}_o: \tau_{ij} &= \bar{\Delta} \;\; \text{for some }\bar{\Delta}, \text{ for all}\;\; i = 1,..,B; j = 1,..,n_i\\
\mathbf{H}_{a}: \tau_{ij} &\neq \tau_{i'j'}.\;\; \text{for some}\;\; i,i', j, j'\\
\end{align*}


Let $F(\bZ)$ be the $F$-ratio for a partial $F$-test comparing a regression of $W\bhtau$ on $Q_1$ to one on $Q_2 = [Q_1,M] = [Q_1, (I-H_{Q_{1}})W\bar{X}]$ with observed treatment allocation $\bZ$,
\begin{align*}
F(\bZ) &= \left(\frac{\bhtau^TW(I-H_{Q_{1}})W\bhtau - \bhtau^TW(I-H_{Q_{2}})W\bhtau}{\bhtau^TW(I-H_{Q_{2}})W\bhtau}\right)\frac{B-rank(Q2)}{K}\\
&= \left(\frac{\bhtau^TWH_{M}W\bhtau}{\bhtau^TW(I-H_{Q_{2}})W\bhtau}\right)\frac{B-rank(Q2)}{K},
\end{align*} where the second line stems from orthogonality of $M$ and $Q_1$. Small values for this ratio indicate that the reduction in residual variation from using $Q_2$ was modest relative to the model only containing $Q_1$. Large values for this ratio indicate substantial reduction in residual variation from exploiting effect modification through $Q_2$. 

Note that while the null hypothesis specifies that the treatment effect is additive, it does not specify the value of the additive treatment effect. That is, in general the true value of the additive effect, call it $\bar{\Delta}_0$, is a nuisance parameter for the desired inference. Fortunately, our choice of test statistic eschews this dependence.

\begin{proposition}
$F(\bZ)$ is a pivotal statistic for testing the null of an additive treatment effect in a finely stratified experiment.
\end{proposition}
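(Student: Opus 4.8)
The plan is to show that under the sharp null $\mathbf{H}_o$ the statistic $F(\bZ)$ is \emph{numerically} invariant to the unknown additive effect, so that its randomization distribution over $\bZ \in \Omega$ cannot depend on this nuisance parameter. First I would write out the observed responses under the null. If $\tau_{ij} = \bar{\Delta}_0$ for all $i,j$, then $r_{1ij} = r_{0ij} + \bar{\Delta}_0$, so the observed outcome is $R_{ij} = r_{1ij}Z_{ij} + r_{0ij}(1-Z_{ij}) = r_{0ij} + \bar{\Delta}_0 Z_{ij}$. Substituting this into the block-level difference-in-means $\hat{\tau}_i = \bar{R}_{1i} - \bar{R}_{0i}$ and using $\sum_j Z_{ij} = n_{1i}$ together with $Z_{ij}(1-Z_{ij}) = 0$, the treated average picks up an extra $\bar{\Delta}_0$ while the control average is unaffected, yielding $\hat{\tau}_i = \hat{\tau}_i^{(0)} + \bar{\Delta}_0$, where $\hat{\tau}_i^{(0)}$ is the block difference-in-means one would form from the control potential outcomes $\{r_{0ij}\}$ alone. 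In vector form this reads $\bhtau = \bhtau^{(0)} + \bar{\Delta}_0 \mathbf{e}$, hence $W\bhtau = W\bhtau^{(0)} + \bar{\Delta}_0 W\mathbf{e}$. This step does not use fine stratification and holds for arbitrary block sizes.

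The second step is to verify that the perturbation direction $W\mathbf{e}$ is annihilated by both quadratic forms appearing in $F(\bZ)$. By construction $Q_1 = [\mathbf{e}, W\mathbf{e}-\mathbf{e}]$ (or $Q_1 = \mathbf{e}$ when block sizes are equal, in which case $W = I$), so $W\mathbf{e} = (W\mathbf{e}-\mathbf{e}) + \mathbf{e}$ lies in the column space of $Q_1$, and \emph{a fortiori} in that of $Q_2 = [Q_1, M]$. Consequently $(I - H_{Q_2})W\mathbf{e} = 0$. Moreover $M = (I-H_{Q_1})W\bar{X}$ has columns lying in the orthogonal complement of the column space of $Q_1$, so $W\mathbf{e}$ is orthogonal to the columns of $M$ and $H_M W\mathbf{e} = 0$.

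Third, I would substitute $W\bhtau = W\bhtau^{(0)} + \bar{\Delta}_0 W\mathbf{e}$ into the numerator $\bhtau^T W H_M W\bhtau$ and the denominator $\bhtau^T W(I-H_{Q_2})W\bhtau$ of $F(\bZ)$. Because $H_M W\mathbf{e} = 0$ and $(I-H_{Q_2})W\mathbf{e} = 0$, expanding each quadratic form kills every cross term and every pure $\bar{\Delta}_0^2$ term, leaving $\bhtau^{(0)T} W H_M W\bhtau^{(0)}$ and $\bhtau^{(0)T}W(I-H_{Q_2})W\bhtau^{(0)}$ respectively. Thus $F(\bZ)$ equals the identical $F$-ratio formed from $\bhtau^{(0)}$, which is a function of $\bZ$ and the fixed control potential outcomes only. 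Since this holds for every $\bZ \in \Omega$, the map $\bZ \mapsto F(\bZ)$ does not involve $\bar{\Delta}_0$, so the distribution induced by drawing $\bZ$ uniformly on $\Omega$ is identical for every value of $\bar{\Delta}_0$; that is, $F(\bZ)$ is pivotal.

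I expect the only point requiring genuine care to be the second step --- confirming that $W\mathbf{e}$ simultaneously sits in the column space of $Q_2$ and is orthogonal to $M$ --- since this is exactly where the particular construction $Q_1 = [\mathbf{e}, W\mathbf{e}-\mathbf{e}]$ and $M = (I-H_{Q_1})W\bar{X}$ is exploited, and it must be checked separately (albeit trivially) in the equal-block-size case where the weighted-size column is dropped and $W = I$. Everything else reduces to the routine expansion of the two quadratic forms along the annihilated direction.
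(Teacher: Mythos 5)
Your proposal is correct and follows essentially the same route as the paper: decompose $W\bhtau$ under the null into a piece not involving $\bar{\Delta}_0$ plus the deterministic vector $\bar{\Delta}_0 W\bone$, then observe that $W\bone$ lies in the column space of $Q_1$ and $Q_2$ so that this term is annihilated by both quadratic forms in $F(\bZ)$. Your explicit verification that $H_M W\bone = 0$ is just a slightly more detailed accounting of the same cancellation the paper invokes.
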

\begin{proof}
Suppose the null hypothesis was true and that the additive treatment effect equaled $\bar{\Delta}_0$. Note then that in the $i^{th}$ block, $w_i\hat{\tau}_i$ can be written as 
\begin{align*}
w_i\hat{\tau}_i 
&= w_i\sum_{j=1}^{n_i}\left(Z_{ij}(r_{0ij} + \bar{\Delta}_0)/n_{1i} - (1-Z_{ij})r_{0ij}/n_{0i}\right)\\
&= w_i\bar{\Delta}_0 + w_i\sum_{j=1}^{n_{i}}\left(Z_{ij}r_{0ij}/n_{1i} - (1-Z_{ij})r_{0ij}/n_{0i}\right),
\end{align*}
Hence, the vector $W\bhtau_i$ can be broken into the sum of two vectors, one of which is a mean zero random variable, and the other being the deterministic vector $\bar{\Delta}_0W\bone$. To complete the proof, simply note that $W\bone$ is in the columnspace of both $Q_1$ and $Q_2$, such that the term $\bar{\Delta}_0W\bone$ drops out of both the numerator and denominator of $F(\bZ)$.
\end{proof}

Let $t$ be the observed value of $F(\bZ)$ in the sample at hand. To compute a $p$-value corresponding to $t$, we can simply choose an arbitrary value for the additive treatment effect, say $\bar{\Delta}_0 = 0$, and compute the randomization distribution of $F(\bZ)$ which is entirely specified under the null,
\begin{align}\label{eq:pval}
p_{val} &= \frac{1}{|\Omega|}\sum_{\bz \in \Omega} \chi\left\{F(\bz) \geq t \mid \cF, \cZ, \tau_{ij} = 0\;\; \forall\;\; i,j \right\},
\end{align} where $\chi\{A\}$ is an indicator that the event $A$ occurred. 


Among alternatives to strict additivity, the test will be more powerful when there exists effect modification that is well modeled by a regression on $Q_2$. The test will not be particularly powerful when there exists heterogeneity that is not well modeled as a linear function of the covariates which compose $Q_2$. Regardless, the test will maintain the desired size even in finite samples as, for each fixed value of $\bar{\Delta}_0$, the distribution of $T(\bZ)$ can be computed exactly under the null of additivity.

Note that, in principle, other estimation procedures beyond linear regression could be used to compare sums of squared errors including and not including the observed covariates. In general, the corresponding test statistics will not be pivotal, meaning that their distribution could depend on the value of the additive treatment effect. This can be accommodated through the technique of \citet{ber94} by first finding $1-\gamma$ confidence interval for the value of $\bar{\Delta}$ through inversion of randomization tests under the assumption of additivity \citep{obs}, and finding the maximal $p$-value for the test statistic for values of $\bar{\Delta}_0$ within the confidence interval, and adding $\gamma$ to the result. See \citet{din15mod} for a recent application of this idea to testing for effect variation in completely randomized experiments. 

\section{Illustrations and simulations}
\subsection{Variance estimation in finely stratified experiments}\label{sec:sim1}
We now explore the improvements in inference for the $SATE$ and $CATE$ that can be attained by exploiting effect modification, and illustrate our caveat about these benefits not extending to inference on the $PATE$. There are $B$ blocks, $0.4B$ of which are triplets and  $0.6B$ of which are pairs. The blocks are formed by taking an $iid$ sample of a $k=10$ dimensional vector of covariates $\bx_i$, where each component is $iid$ uniform on the interval [0,1]. Modifying the function utilized in the simulation study of \citet{fri91} to remove linear terms, for each block-level covariate vector $\bx_i$ we then sample potential outcomes under treatment and control from the following distribution:

\begin{align} \label{eq:gen}
r_{1ij} &= a\left(10\sin(\pi x_{i1}x_{i2}) + 20(x_{i3}-1/2)^2 + 10\exp(x_{i4}) +5(x_{i5}-1/2)^3\right) + b\epsilon_{ij}\\
r_{0ij} &= 10\sin(\pi x_{i1}x_{i2}) + 20(x_{i3}-1/2)^2 + 10\exp(x_{i4}) + 5(x_{i5}-1/2)^3 + \epsilon_{ij}\nonumber\\
\epsilon_{ij}&\overset{iid}{\sim}\mathcal{N}(0,1)\nonumber
\end{align}

For the simulations in this subsection, we set $B=100$ for the number of strata, and $a=2$ and $b=2$ in (\ref{eq:gen}). Under this specification the average treatment effect at the population level, $\bar{\Delta}^{(P)}$, is roughly 24.1, and is fixed as an estimand across samples. Further, $\var(\hat{\Delta}\mid \cZ = 0.437$. The sample average treatment effect, $\bar{\Delta}$, and the conditional average treatment effect, $\bar{\Delta}^{(C)}$, vary with each realization, as their definitions depend on $\cF$ and $\cC$ respectively. There is effect heterogeneity present, as $\E[\tau_{ij}\mid \cC]  = 10\sin(\pi x_{i1}x_{i2}) + 20(x_{i3}-1/2)^2 + 10\exp(x_{i4}) + 5(x_{i5}-1/2)^3$. In this generative model, $\E[\var(\hat{\Delta}\mid \cC, \cZ)] = 0.443$, and  $\E[\var(\hat{\Delta}\mid \cF, \cZ)] = 0.401$. In each simulation, we
\begin{enumerate}
\item Simulate covariates $\bx_i$, $i=1,...,100$ and potential responses $(r_{1ij}, r_{0ij})$, $j = 1,..,n_i$, setting $n_i=3$ for 40 blocks and $n_i=2$ for 60 blocks
\item Compute the variance estimators $S^2_1(Q)$, $S^2_2(Q)$, and $S^2_3(Q)$
\end{enumerate}

We form the matrix $Q$ used to compute the variance estimators in three ways,
\begin{enumerate}
\item \textit{None}. Only including a constant column and the stratum weights .
\item \textit{Correct}. Including a constant column, stratum weights, and weighted transformed covariates $w_i\sin(\pi x_{i1}x_{i2})$, $w_i(x_{i3}-1/2)^2$, $w_i\exp(x_{i4})$, $w_i(x_{i5}-1/2)^3$  (\textit{Correct}).
\item \textit{Incorrect}. Including a constant column, stratum weights, and weighted values for the original 10 covariates (without transformation), $w_ix_{ik}$, $k=1,...,10$.
\end{enumerate}
The functional form for effect modification is thus correctly specified within the second form, and incorrectly specified in the third form.

\begin{table}
\caption{\label{tab:7.1}Expectations for variance estimators for various matrices $Q$. Target expectations for valid inference on the $SATE$, $CATE$, and $PATE$ are 0.0401, 0.0443, and 0.437 respectively.}
\centering
\fbox{\begin{tabular}{r c c c}
&\multicolumn{3}{c}{Covariates in $Q$}\\
& {None} & {Correct} & {Incorrect}\\
\hline\hline
$S^2_1(Q)$& 0.437 & 0.0460 &0.108\\
$S^2_2(Q)$& 0.447& 0.0474 & 0.126\\
$S^2_3(Q)$&0.437& 0.0443& 0.110\\
\end{tabular}}
\end{table}
Table \ref{tab:7.1} shows the results of this simulation. We see that, as Propositions 1-4 guarantee, $S^2_1(\cdot)$ and $S^2_2(\cdot)$ remained conservative in expectation for the variances for estimating the $SATE$ and $CATE$ for all choices of $Q$. Using the correctly specified covariates allows the expectations to come closest to the true values for the variances, while the incorrect specification still performs substantially better than ignoring the covariates altogether. For inference on the $PATE$, we see that only the choice of $Q$ which ignores the covariates yields a valid estimator for the variance; the choices incorporating the covariates would produce substantially anticonservative inference for the population average treatment effect. While not guaranteed to be as such in this simulation, we see that $S^2_3(\cdot)$ produced estimators which remained conservative in expectation for inference on the $SATE$ and $CATE$ for all three choices of $Q$, and for the $PATE$ through the choice of $Q$ ignoring the covariates.
\subsection{Testing for effect heterogeneity}
We now demonstrate the test for effect modification proposed in \S 6. We use a similar generative model for the covariates as was employed in \S \ref{sec:sim1}, but we instead set $b=1$ and conduct the test for effect heterogeneity for increasing values of $a$ in (\ref{eq:gen}). At $a=1$, the null hypothesis of additivity is true; all values $a\neq 1$ imply that the null is false. We also set $B=20$ as a means of illustrating the exactness of the test. As in the previous section, we conduct the test utilizing both the correct and incorrect specification for the functional form of the covariates in forming the matrix $Q_2$. Hence, in each iteration we 
\begin{enumerate}
\item Simulate covariates $\bx_i$, $i=1,...,20$ and potential responses $(r_{1ij}, r_{0ij})$, $j = 1,..,n_i$, setting $n_i=3$ for 8 blocks and $n_i=2$ for 12 blocks
\item Randomly allocate individuals to treatment or control in accordance with the finely stratified design, recording the observed outcomes and the value for the test statistic $F(\bZ)$
\item Estimate the permutation $p$-value in (\ref{eq:pval}) through Monte Carlo simulation.
\end{enumerate}

\begin{figure}[h]
\centering
\makebox{\includegraphics[scale=.5]{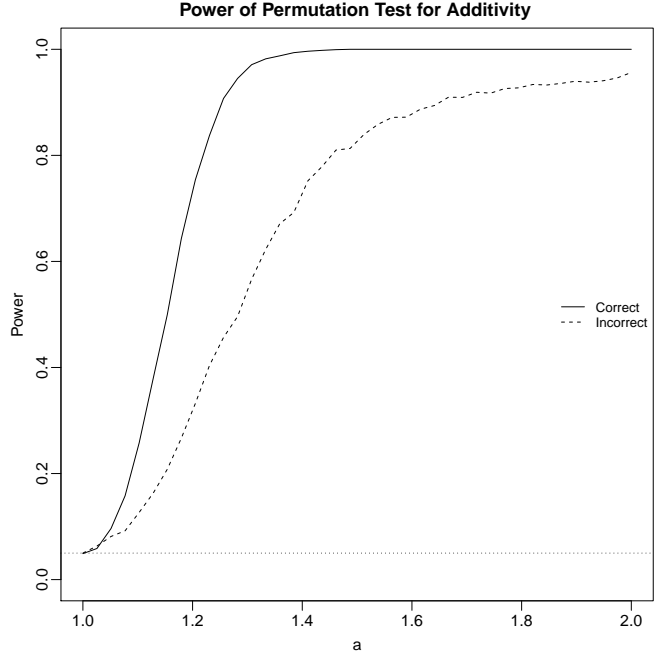}}
\caption{\label{fig:power}The power of the permutation test for effect modification under both correct and incorrect covariate specification as a function of $a$, which controls the departure from additivity. $a=1$ corresponds to the null being true. The horizontal line corresponds to $\alpha=0.05$, the desired size in the simulation.}
\end{figure}

We set $\alpha=0.05$ for this study. Figure \ref{fig:power} shows the power of our testing procedure as a function of $a$, under both correct and incorrect specifications for $Q_2$. Note that at $a=0$ our tests have the correct size. As $a$ increases, the power increases for both choices of $Q_2$, but more rapidly for the correct choice of $Q_2$. Hence, while the specification of $Q_2$ affects the power of the test, it does not affect its validity in terms of maintaining the desired Type I error rate. 
\newpage
\subsection{Pairs or quartets?}\label{sec:2v4}
In this study, we explore the extent to which the analytical limitations of paired experiments described in the introduction are mitigated by the methods presented herein. In our simulation study, we have $N=80$ total individuals. We have a single covariate for each individual. In our study, we fix the observed values for the covariate at $\mathbf{x} = (0.25,0.25, 0.5,.0.5,...,10, 10)$. Hence, there are 40 pairs of individuals who share the same value for the covariate, hence forming natural pairs. Due to concerns over the analytical limitations of paired experiments, the practitioner may instead choose to create 20 quartets of individuals, namely those taking on values $\{0.25, 0.25, 0.5, 0.5\}$, $\{0.75,0.75,1,1\}$,....$\{9.75,9.75,10,10\}$. 

For each experiment, we generate potential outcomes for individual $j$ as
\begin{align*}
r_{1j} = 100 + 30x_j + \epsilon_j,\;\;\;
r_{0j} = 20x_j + \epsilon_j,\;\;\;
\epsilon_j \overset{iid}{\sim} \mathcal{N}(0, 10^2).
\end{align*}

$x_j$ is thus an effect modifier, as the treatment effect for individual $j$ is $100 + 10x_j$. We first consider two situations: one in which the practitioner has access to $\mathbf{x}$ itself, and one in which the practitioner instead has access to $b_{j} = \exp({x}_j/3)$ for each individual. Let $Q_x$ be the $B\times 2$ matrix with $\bone$ in the first column and the value of $x_j$ defining each pair in the second column, and let $Q_b$ be the analogous for the incorrectly specified covariate $b_j$ . We imagine the target of inference is the conditional average treatment effect, and hence seek to estimate $\var(\hat{\Delta}\mid\cC, \cZ)$ for both the paired and quartet design. Under this specification, we simply utilize the bias formulae in Propositions 1, 2 and 4 to calculate the expectation for the variance estimators $S^2_P$, $S^2_\ell(Q_x)$ and $S^2_\ell(Q_b)$ for $\ell=1,2,3$. For the quartet design, we simply compute the true value of the variance, along with the bias of the conventional estimator $S^2_{CS}$. 

Table \ref{tab:2v4} contains relevant numerical information for comparing estimation under the two designs. We first see that the true variance under the paired design is smaller than that under the quartet design (5.00 vs 5.65), such that if we had access to this true variance the paired design would undoubtedly be preferred; however, the classical variance estimator in a paired experiment has an expectation of 21.35, while the classical variance estimator in the quartet experiment has an expectation of 5.68. This comparison of conventional variance estimators highlights the motivation for recommending quartet experiments over paired experiments within the literature. In the rows labeled ``Correct, Linear" and ``Incorrect, Linear" we see the improvements in variance estimation both under proper specification and misspecification of the response function. When the response function is properly specified both $S^2_1(Q_x)$ and $S^2_2(Q_{x})$ are lower in expectation than $S^2_{CS}$, showing that proper modeling of the treatment effect heterogeneity provides variance estimators whose expectations are smaller than that of the coarsely stratified experiments. When the heterogeneity is not properly modeled, improvements in the variance estimator are still attained; however, the expectations for the variance estimators now exceed that from the quartet experiment.

\begin{table}
\caption{\label{tab:2v4}A comparison of variance estimation in the paired and quartet designs}
\centering
\fbox{\begin{tabular}{c c c c c c|c c}
\multicolumn{6}{c}{Pairs} & \multicolumn{2}{|c}{Quartets}\\
Covariates &$\var(\widehat{\Delta})$ &$S^2_P$ & $S^2_1$ & $S^2_2$ & $S^2_{3}$& $\var(\widehat{\Delta})$ & $S^2_{CS}$\\
\hline
&{5.00}&21.35&-&-&-&{5.65}&{5.68}\\
Correct, Linear & - & -&5.09 &5.26 &5.00 &-&-\\
Incorrect, Linear&-&-&7.12&8.76&8.25&-&-\\
Correct, Cubic&-&-&5.52&5.59&5.00&-&-\\
Incorrect, Cubic&-&-&7.23& 6.06&5.24&-&-\\
\end{tabular}}
\end{table}

Recall once again that the findings of Propositions 1 and 2 facilitate conservative estimation of the variance for any fixed matrix $Q$. This allows us, before conducting the experiment, to decide to include polynomial terms in the matrix $Q$ to more flexibly model the relationship between the covariates at hand and the treatment effects. Suppose we now add quadratic and cubic terms of $\mathbf{x}$ and $\mathbf{b}$, calling the corresponding matrices $Q_{x3}$ and $Q_{b3}$ respectively. In the last two rows of \ref{tab:2v4}, we consider the performance of $S^2_1(Q_{x3}), S^2_1(Q_{b3})$, $S^2_2(Q_{x3})$, and $S^2_2(Q_{b3})$. We see that both $S^2_1(Q_{x3})$ and $S^2_2(Q_{x3})$ have a larger expectation than what was attained when we omitted the polynomial terms. By adding two more predictor variables, the sum of diagonals of the correspond hat matrix increases from 2 to 4, hence resulting in additional inflation of residuals. For $S^2_{1}(Q_{b3})$, we see that this inflation has also swamped any benefit from flexibility in modeling as its expectation is larger than that of $S^2_1(Q_b)$. For $S^2_2(Q_{b3})$, we see that the additional flexibility has been beneficial, and the expectation for the variance estimator has decreased relative to $S^2_2(Q_{b3})$, although not enough to fall below the level of $S^2_{CS}$.

A component of the remaining conservativeness of the estimators $S^2_{1}(Q_{b3})$ and $S^2_2(Q_{b3})$ stems from our variance estimators being unbiased in expectation regardless of the degree of heteroskedasticity across blocks, and hence having to be inflated in the presence of high leverage points. If one is willing to do away with the requirement, $S^2_{3}(Q_{b3})$ becomes an appealing estimator. This estimator combines elements of $S^2_{1}(Q)$, essentially adjusting the variance estimator by $1/(1-h_{Qii})$ instead of $1/(1-h_{Qii})^2$, and $S^2_2(Q)$, by adjusting residuals instead of responses to account for influential points. The column labeled $S^2_{3}(\cdot)$ corresponds to this estimator. In the setting considered herein, it is exactly unbiased with $Q_{x}$ and $Q_{b3}$, owing to the fact that $\var(\hat{\tau}_i\mid \cC, \cZ)$ is constant across pairs. It is necessarily less conservative than $S^2_2(\cdot)$ for all four choices of $Q$ considered, and it is less conservative than $S^2_1(\cdot)$ for all choices of $Q$ except for $Q_b$. As $B$ decreases the estimators all converge to the same limit, yet here we see the potential benefits of using the estimator $S^2_{3}(\cdot)$.

\section{An example: The Children's Television Workshop Experiment}

\citet{bal73} designed an experiment to evaluate an educational television program which sought to improve reading skills for young children. \S 10.7 of \citet{imb15} examined a subset of the experiment conducted in Youngstown, Ohio with $B=8$ primary schools. In each school, a pair of first-grade classes was selected, with one class in each pair assigned to watch the show during reading class and the other class assigned to continuing with the usual curriculum. Each class has a pre-test score assessing average reading ability, $x_{ij}$ in our notation, along with a post-test after the experiment, $R_{ij} = Z_{ij}r_{1ij} + (1-Z_{ij})r_{0ij}$, where $Z_{ij}$ is 1 if the class was shown the educational program \textit{The Electric Company}. $\hat{\tau}_i$ is the difference between the observed treatment and control scores on the post-test in the $i$th pair.

In this data set, the conventional difference-in-means estimator was $\hat{\tau} = 13.4$, with an observed value of the conventional standard error of $S_P = 4.6$. We now consider using the estimators developed herein to improve upon this standard error estimate. We include linear and quadratic terms in the covariates, defining $Q_2 = [\bone, (I - \bone\bone^T/B)\bar{X}]$, where the $\{i,1\}$ entry of  $\bar{X}$ is $\bar{x}_{i1} = (x_{i1}+x_{i2})/2$, and the $\{i,2\}$ entry of $\bar{X}$ is $\bar{x}_{i2} = (x_{i1}^2+x_{i2}^2)/2$. The values for $S_1(Q_2)$, $S_2(Q_2)$, and $S_3(Q_2)$ are 4.2, 4.34, and 3.57 respectively. All three estimators would thus facilitate the construction of narrower confidence intervals than the ones constructed using $S_P$ while maintaining the conclusion that the treatment was effective at $\alpha=0.05$. As noted, $S^2_3(Q_2)$ is not in general unbiased for the variance when the target of inference is the sample average treatment effect, so the discrepancy between this estimator and the other two may well stem from downwards bias. This concern is not relevant for the other two estimators, a reason to prefer them particularly in small samples.

\section{Discussion}
When the target of estimation is either the sample or the conditional average treatment effect, the developments presented in this work facilitate improved variance estimation for finely stratified experiments for inference conducted based on both the conventional difference-in-means estimator and estimators utilizing regression adjustment. As the simulation study in \S\ref{sec:2v4} illustrated, these have certainly mitigated the analytical limitations of finely stratified experiments by providing more powerful inference than that available through classical variance estimators, yet the analytical issues have not been entirely resolved. If the regression model is grossly misspecified, the variance estimators presented herein may not provide an improvement over that of an experiment with blocks of size four. 

One direction for future research is investigating the extent to which the improvements presented in this work can be employed in the super-population setting considered by \citet{van12} wherein rather than pairs being drawn $iid$, individuals are drawn $iid$ and then optimally paired after being selected into the study. More generally, the nature of these improvements raise additional questions about the extent to which inference on local estimands in randomized experiments should be transferable to population-level estimands in popular super-population formulations. With respect to the conventional variance estimator in a completely randomized experiment \citet{imb15} describe the consonance between finite-population and super-population inference through this variance estimator as an ``attractive property." \citep[\S 6.7, p.101]{imb15}. Yet as was noted in \S 6.3, variance estimators which exploit effect heterogeneity can yield anticonservative inference at the level of the $PATE$ as defined in \citet{ima08}. Our perspective is that rather than detracting from the appeal of these new estimators, this dissonance forces the researcher to critically assess the question, ``to whom does the inference apply?" The answer is often left ambiguous in the analysis of randomized experiments, and $iid$ assumptions are often made vacuously, without consideration of the true nature of the process by which the data came to be and the corresponding ramifications for the integrity of the performed inference. 

\appendix
\section{Lemmas}
\begin{lemma} \label{lemma:1} Under Conditions 1-3, $B^{-1}\sum_{i=1}^Bw_i\hat{\tau}_im_{ik}$ converges in probability to \\$\underset{n\rightarrow\infty}{\lim}B^{-1}\sum_{i=1}^nw_i\bar{\tau}_im_{ik}$ for any $k=1,...,rank(M)$. Further, and $B^{-1}\sum_{i=1}^Bw_i^2\hat{\tau}_i$ and $B^{-1}\sum_{i=1}^Bw_i\hat{\tau}_i$ converges in probability to $\underset{n\rightarrow\infty}{\lim}B^{-1}\sum_{i=1}^nw_i^2\bar{\tau}_i$ and  $\underset{n\rightarrow\infty}{\lim}B^{-1}\sum_{i=1}^nw_i\bar{\tau}_i$ respectively.
\end{lemma}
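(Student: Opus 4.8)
The plan is to prove all three statements by the same route: each is an average of the \emph{independent} (across blocks) random variables $\hat{\tau}_i$, which have known conditional means $\bar{\tau}_i$ and known conditional variances, multiplied by deterministic weights. Convergence in probability will then follow from Chebyshev's inequality once I establish (i) that the conditional expectation converges to the stated limit and (ii) that the conditional variance tends to zero. I will concentrate on the first claim, $T_B := B^{-1}\sum_{i=1}^B w_i \hat{\tau}_i m_{ik}$, since the other two are the special cases obtained by replacing $m_{ik}$ with the uniformly bounded factor $w_i$ or with $1$.

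First I would record the structural facts. Because the blocks are randomized independently, the $\hat{\tau}_i$ are mutually independent conditional on $\cF$ and $\cZ$, with $\E[\hat{\tau}_i\mid\cF,\cZ]=\bar{\tau}_i$ and $\var(\hat{\tau}_i\mid\cF,\cZ)=\sigma^2_{1i}/n_{1i}+\sigma^2_{0i}/n_{0i}-\sigma^2_{\tau i}/n_i$, while the entries $m_{ik}$ of $M=(I-H_{Q_1})W\bar{X}$ are fixed. Hence $\E[T_B\mid\cF,\cZ]=B^{-1}\sum_i w_i\bar{\tau}_i m_{ik}$, which converges to $\eta_{Mk}$ by Condition 3. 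It remains to show $\var(T_B\mid\cF,\cZ)=B^{-2}\sum_i w_i^2 m_{ik}^2\var(\hat{\tau}_i\mid\cF,\cZ)\to 0$, and by the spare factor of $B^{-1}$ it suffices to prove $B^{-1}\sum_i w_i^2 m_{ik}^2\var(\hat{\tau}_i\mid\cF,\cZ)=O(1)$. To obtain this I would apply Cauchy--Schwarz across blocks, splitting the average into $(B^{-1}\sum_i m_{ik}^4)^{1/2}(B^{-1}\sum_i w_i^4\var(\hat{\tau}_i\mid\cF,\cZ)^2)^{1/2}$. The second factor is controlled by Condition 2: discarding the nonnegative term $\sigma^2_{\tau i}/n_i$ and using $n_{1i},n_{0i}\ge 1$ gives $\var(\hat{\tau}_i\mid\cF,\cZ)\le\sigma^2_{1i}+\sigma^2_{0i}$; each $\sigma^2$ is bounded by the corresponding raw within-block second moment (centering only decreases it), and since the block sizes are bounded by Condition 1 one has $(\sigma^2_{1i})^2\le C\,n_i^{-1}\sum_j r_{1ij}^4$, which reduces $B^{-1}\sum_i w_i^4\var^2$ to exactly the fourth-moment averages assumed finite in Condition 2.

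The main obstacle is the first factor, $B^{-1}\sum_i m_{ik}^4$, because a projection need not preserve $\ell_4$ control. Here I would exploit that $H_{Q_1}$ projects onto the at most two-dimensional space spanned by $\mathbf{e}$ and the orthogonal column $W\mathbf{e}-\mathbf{e}$. Writing $v_i=w_i\bar{x}_{ik}$ for the $i$-th entry of the relevant column of $W\bar{X}$, orthogonality yields the closed form $m_{ik}=v_i-\bar{v}-\hat{c}_1(w_i-1)$, where $\bar{v}=B^{-1}\sum_i v_i$ and $\hat{c}_1=\sum_i (w_i-1)v_i/\sum_i(w_i-1)^2$. A convexity bound such as $(a+b+c)^4\le 27(a^4+b^4+c^4)$ then reduces $B^{-1}\sum_i m_{ik}^4$ to three averages: $B^{-1}\sum_i v_i^4=B^{-1}\sum_i w_i^4\bar{x}_{ik}^4$, finite by Jensen ($\bar{x}_{ik}^4\le n_i^{-1}\sum_j x_{ijk}^4$) together with the covariate part of Condition 2; the constant $\bar{v}^4$, bounded by the same average; and $|\hat{c}_1|^4\,B^{-1}\sum_i(w_i-1)^4$. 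Since Condition 1 forces $w_i=Bn_i/N$ to be uniformly bounded above and below by positive constants (the mean block size $N/B$ lies in a compact interval), $B^{-1}\sum_i(w_i-1)^4=O(1)$, and a further Cauchy--Schwarz gives $|\hat{c}_1|=O(1)$.

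Assembling these bounds shows $B^{-1}\sum_i m_{ik}^4=O(1)$, so the conditional variance of $T_B$ vanishes and Chebyshev's inequality yields the first statement. The second and third statements then follow by the identical argument with $m_{ik}$ replaced by the bounded factors $w_i$ and $1$: boundedness of the requisite averages again comes from Conditions 1--2, and convergence of the means from Condition 3. The one delicate point throughout is the passage from the given fourth-moment conditions to boundedness of the mixed fourth-order averages via the explicit low-rank form of $H_{Q_1}$; everything else is a routine weak law of large numbers.
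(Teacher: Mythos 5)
Your proof follows essentially the same route as the paper's: compute the conditional mean (which converges by Condition 3), bound the conditional variance via Cauchy--Schwarz across blocks and the fourth-moment bounds of Conditions 1--2, and conclude by Chebyshev. The one place you go beyond the paper is in explicitly establishing $B^{-1}\sum_i m_{ik}^4 = O(1)$ from the low-rank structure of $H_{Q_1}$ and the covariate part of Condition 2; the paper's proof simply inserts the factor $\{\sum_i m_{ik}^4\}^{1/2}$ and asserts the final bound, leaving this step implicit even though Condition 3 only controls second moments of $M$. That added care is correct and genuinely needed, but it does not change the argument's architecture.
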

\begin{proof}
We prove the result for $B^{-1}\sum_{i=1}^Bw_i\hat{\tau}_im_{ik}$; the proof for the remaining two weighted sums are analogous. For any $k$, $\E[B^{-1}\sum_{i=1}^Bw_i\hat{\tau}_im_{ik}\mid\cF, \cZ] = B^{-1}\sum_{i=1}^Bw_i\bar{\tau}_im_{ik}$, which has a finite limit by Condition 3. We now show that $\var(B^{-1}\sum_{i=1}^nw_i\hat{\tau}_im_{ik}\mid\cF, \cZ)$ converges to zero.
\begin{align*}\var\left(B^{-1}\sum_{i=1}^nw_i\hat{\tau}_im_{ik}\mid\cF, \cZ\right)&=B^{-2}\sum_{i=1}^Bw_i^2\left({\sigma^2_{1i}}/{n_{1i}} + {\sigma^2_{0i}}/({n_{0i}}) - {\sigma^2_{\tau i}}/{n_i}\right)(m_{ik})^2\\ 
&\leq B^{-2}\left\{\sum_{i=1}^B\left(\sum_{j=1}^{n_{i}}w_i^2(r_{1ij}^2+r_{0ij}^2)\right)^2\right\}^{1/2}\left\{\sum_{i=1}^Bm_{ik}^4\right\}^{1/2}\\
& \leq B^{-2}\left\{\sum_{i=1}^Bn_i^2\left(\sum_{j=1}^{n_{i}}w_i^4(r_{1ij}^4/n_i+r_{0ij}^4/n_i)\right)\right\}^{1/2}\left\{\sum_{i=1}^Bm_{ik}^4\right\}^{1/2}\\
& \leq C_1C_2/B\\
\end{align*}
by Conditions \ref{cond:1} and \ref{cond:2}, which tends to zero as $B\rightarrow \infty$. Chebyshev's inequality and Condition \ref{cond:2} complete the proof. 
\end{proof}
\begin{lemma}Under Condition 2, $h_{ii} \rightarrow 0$.
\end{lemma}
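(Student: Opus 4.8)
Here $h_{ii}$ denotes the $i$-th diagonal entry of the hat matrix $H_M = M(M^TM)^{-1}M^T$, so that, writing $m_i^T$ for the $i$-th row of $M$, it has the leverage form $h_{ii} = m_i^T(M^TM)^{-1}m_i$. My plan is to bound this uniformly in $i$ by treating the two factors separately. Since $M^TM$ is positive definite, $h_{ii} \le \|m_i\|^2/\lambda_{\min}(M^TM)$, and $\lambda_{\min}(M^TM) = B\,\lambda_{\min}(B^{-1}M^TM)$. Using the invertibility of $\lim_B B^{-1}M^TM = \Sigma_M$ (a standing assumption of this section, Condition 3), $\lambda_{\min}(B^{-1}M^TM)$ stays above some $c > 0$ for all large $B$, so $\lambda_{\min}(M^TM) \ge cB$ eventually. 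The task then reduces to showing $\max_i \|m_i\|^2 = o(B)$, and this is where the bounded fourth moments of Condition 2 enter.

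Since $\|m_i\|^2 = \sum_{k=1}^K m_{ik}^2$ with $K$ fixed, it is enough to show $\max_i|m_{ik}| = o(B^{1/2})$ for each $k$. I would write the $k$-th column of $M = (I-H_{Q_1})W\bar{X}$ as $(I - H_{Q_1})v_k$, where $v_k$ has entries $w_i\bar{x}_{ik}$. Passing from individual covariates to block means by Jensen's inequality, $\bar{x}_{ik}^4 \le n_i^{-1}\sum_{j=1}^{n_i}x_{ijk}^4$, gives
\[
\max_i (w_i\bar{x}_{ik})^4 \;\le\; \sum_{i=1}^B\sum_{j=1}^{n_i}\frac{w_i^4 x_{ijk}^4}{n_i} \;=\; B\Big(B^{-1}\sum_{i=1}^B\sum_{j=1}^{n_i}\frac{w_i^4 x_{ijk}^4}{n_i}\Big)\;\le\; B\,C_2
\]
by Condition 2, so $\|v_k\|_\infty = O(B^{1/4})$.

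It then remains to check that subtracting the $Q_1$-fit cannot undo this uniform bound. Because $\mathbf{e}^T(W\mathbf{e}-\mathbf{e}) = \sum_i(w_i-1) = 0$, the columns of $Q_1$ are orthogonal, so $H_{Q_1}v_k$ has entries $a_k + b_k(w_i-1)$ with $a_k = B^{-1}\sum_i w_i\bar{x}_{ik}$ and $b_k = \sum_i(w_i-1)w_i\bar{x}_{ik}/\sum_i(w_i-1)^2$. The power-mean inequality bounds $|a_k|$ by $C_2^{1/4}$, while Cauchy--Schwarz, the bounded weights $w_i = Bn_i/N \le C_1$ supplied by Condition 1, and the non-degeneracy of $\sum_i(w_i-1)^2$ (which is of order $B$; when block sizes are equal this column is instead dropped) bound $\max_i|b_k(w_i-1)|$ by $O(1)$. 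Hence $\|H_{Q_1}v_k\|_\infty = O(1)$ and $\max_i|m_{ik}| \le \|v_k\|_\infty + \|H_{Q_1}v_k\|_\infty = O(B^{1/4})$. Summing over the $K$ coordinates yields $\max_i\|m_i\|^2 = O(B^{1/2}) = o(B)$, and combining with the first paragraph gives $\max_i h_{ii} \le O(B^{1/2})/(cB) = O(B^{-1/2}) \to 0$.

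The step I expect to be the main obstacle is this last one: ruling out that the projection $(I-H_{Q_1})$ creates a high-leverage block through its removal of the intercept-and-weight fit. Orthogonality of the columns of $Q_1$ makes the fitted coefficients explicit, and bounded weights (Condition 1) keep the fitted vector uniformly $O(1)$; with that in hand, the remainder is a routine assembly of Jensen's inequality, the power-mean inequality, and the eigenvalue lower bound from Condition 3.
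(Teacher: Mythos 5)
Your proof is correct and follows the same skeleton as the paper's: write $h_{ii} = B^{-1}m_i^T(B^{-1}M^TM)^{-1}m_i$ and use the invertibility of $\Sigma_M = \lim_B B^{-1}M^TM$ to control the middle factor. The difference is that the paper stops there — its proof simply asserts $\lim_B B^{-1}m_i^T\Lambda m_i = 0$, treating $m_i$ as if it were a fixed vector, whereas in this triangular-array setting the entries of $M$ change with $B$ and one must still rule out $\|m_i\|^2$ growing at rate $B$. You supply exactly the missing ingredient: the fourth-moment bound gives $\max_i m_{ik}^4 \le \sum_i m_{ik}^4 = O(B)$, hence $\max_i \|m_i\|^2 = O(B^{1/2}) = o(B)$ and $\max_i h_{ii} = O(B^{-1/2})$. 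In that sense your argument is more complete than the one in the paper, and it is the argument the paper actually needs. The one soft spot is your treatment of the $(I-H_{Q_1})$ adjustment in the unequal-block case: you invoke $\sum_i (w_i-1)^2 \asymp B$ to keep the fitted vector $H_{Q_1}v_k$ uniformly bounded, but Conditions 1--3 do not guarantee this (e.g., all pairs except a single triplet gives $\sum_i(w_i-1)^2 = O(1)$). A cleaner route that avoids any lower bound on $\sum_i(w_i-1)^2$ is to bound $B^{-1}\sum_i m_{ik}^4$ directly rather than $\|v_k\|_\infty$ and $\|H_{Q_1}v_k\|_\infty$ separately, or simply to note that such degenerate weight configurations force near-collinearity that is already excluded by the invertibility of $\Sigma_M$ in Condition 3. (Minor bookkeeping: the lemma says ``Condition 2'' but the invertibility of $\lim_B B^{-1}M^TM$ is part of the condition displayed as Condition 3; your citations match the displayed numbering, the paper's do not.)
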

\begin{proof} From Condition 2, we have that $B^{-1}M^TM$ converges to a finite, invertible matrix; let $\Lambda = (\lim_{B\rightarrow\infty}B^{-1}M^TM)^{-1}$. Note that $h_{ii}$ = $m_i^T(M^TM)^{-1}\bm_i = B^{-1}(m_i)^T(B^{-1}M^TM)^{-1}(\bm_i)$
\begin{align*}
\underset{B \rightarrow \infty}{\lim}h_{ii} &= \underset{B \rightarrow \infty}{\lim} B^{-1}\bm_i^T\Lambda\bm_i = 0
\end{align*}
\end{proof}
\begin{lemma} Under Conditions 1-3 and conditional on $\cF, \cZ$, 
\begin{align*} B^{-1}\sum_{i=1}^Bw_i^2\hat{\tau}^2_{i} \overset{p}{\rightarrow} \lim_{B\rightarrow\infty}B^{-1}\sum_{i=1}^Bw_i^2\left(\bar{\tau}_i^2 +{\sigma^2_{1i}}/{n_{1i}} + {\sigma^2_{0i}}/({n_{0i}}) - {\sigma^2_{\tau i}}/{n_i}\right), \end{align*} 

\end{lemma}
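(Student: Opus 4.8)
The plan is to mirror the argument used for Lemma \ref{lemma:1}: first establish that the conditional expectation of $B^{-1}\sum_{i=1}^Bw_i^2\hat{\tau}_i^2$ equals the claimed limiting quantity, then show that its conditional variance tends to zero, and finally invoke Chebyshev's inequality together with the independence of the blocks.

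First I would compute the conditional first moment. Since $\hat{\tau}_i$ is the difference-in-means estimator within block $i$, it is unbiased for $\bar{\tau}_i$ with $\var(\hat{\tau}_i\mid\cF,\cZ) = \sigma^2_{1i}/n_{1i} + \sigma^2_{0i}/n_{0i} - \sigma^2_{\tau i}/n_i$, so that $\E[\hat{\tau}_i^2\mid\cF,\cZ] = \bar{\tau}_i^2 + \sigma^2_{1i}/n_{1i} + \sigma^2_{0i}/n_{0i} - \sigma^2_{\tau i}/n_i$. Taking the $w_i^2$-weighted average yields $\E[B^{-1}\sum_i w_i^2\hat{\tau}_i^2 \mid \cF, \cZ] = B^{-1}\sum_i w_i^2(\bar{\tau}_i^2 + \sigma^2_{1i}/n_{1i} + \sigma^2_{0i}/n_{0i} - \sigma^2_{\tau i}/n_i)$, which converges to a finite limit by Condition 3, since both the $w_i^2\bar{\tau}_i^2$ term and the $w_i^2(\sigma^2_{1i}/n_{1i} + \sigma^2_{0i}/n_{0i} - \sigma^2_{\tau i}/n_i)$ term appear explicitly among the averages assumed to converge.

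Next I would bound the conditional variance. Because the $B$ blocks are independent, $\var(B^{-1}\sum_i w_i^2\hat{\tau}_i^2\mid\cF,\cZ) = B^{-2}\sum_i w_i^4\var(\hat{\tau}_i^2\mid\cF,\cZ) \le B^{-2}\sum_i w_i^4\E[\hat{\tau}_i^4\mid\cF,\cZ]$, so the crux is to control $\E[\hat{\tau}_i^4\mid\cF,\cZ]$ through the fourth-moment conditions. Since within block $i$ the randomization merely reweights fixed potential outcomes, $|\hat{\tau}_i|$ is deterministically bounded by $n_{1i}^{-1}\sum_j|r_{1ij}| + n_{0i}^{-1}\sum_j|r_{0ij}|$; applying the power-mean inequality $(\sum_{k=1}^m a_k)^4 \le m^3\sum_{k=1}^m a_k^4$ to this sum of at most $2n_i$ terms and using $n_{1i},n_{0i}\ge 1$ gives $\hat{\tau}_i^4 \le (2n_i)^3\sum_j(r_{1ij}^4 + r_{0ij}^4)$. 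Under Condition 1 we have $(2n_i)^3 \le (2C_1)^3$, and writing $\sum_j w_i^4 r_{1ij}^4 = n_i\cdot (w_i^4\sum_j r_{1ij}^4/n_i) \le C_1\, w_i^4\sum_j r_{1ij}^4/n_i$ (and similarly for $r_{0ij}$) reduces $B^{-2}\sum_i w_i^4\sum_j(r_{1ij}^4+r_{0ij}^4)$ to $C_1 B^{-1}$ times the averages bounded by $C_2$ in Condition 2. Hence the conditional variance is $O(1/B)$ and tends to zero.

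Finally, Chebyshev's inequality combined with the convergence of the conditional mean delivers convergence in probability to the stated limit. I expect the only mildly delicate step to be the deterministic fourth-moment bound on $\hat{\tau}_i$: one must reduce $\hat{\tau}_i^4$ to a sum of fourth powers of the potential outcomes via the convexity inequality and then match the resulting expression to the precise $w_i^4 r_{1ij}^4/n_i$ normalization appearing in Condition \ref{cond:1}, exploiting the boundedness of block sizes to absorb the stray $n_i$ factors.
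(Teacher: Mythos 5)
Your proposal is correct and follows essentially the same route as the paper: compute the conditional mean (which matches the limit by Condition 3), bound the conditional variance by $B^{-2}\sum_i w_i^4\E[\hat{\tau}_i^4\mid\cF,\cZ]$ using block independence, reduce the fourth moment of $\hat{\tau}_i$ to the averaged fourth powers of the potential outcomes, and conclude via Chebyshev. The only cosmetic difference is that you control $\E[\hat{\tau}_i^4\mid\cF,\cZ]$ with a deterministic triangle-inequality bound plus the power-mean inequality, whereas the paper applies Cauchy--Schwarz twice after using $Z_{ij}(1-Z_{ij})=0$; both yield the same $O(1/B)$ rate under Conditions 1--2.
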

\begin{proof} We have that $\E[B^{-1}\sum_{i=1}^Bw_i^2\hat{\tau}_i^2 \mid \cF, \cZ] = B^{-1}\sum_{i=1}^Bw_i^2(\bar{\tau}_i^2 +{\sigma^2_{1i}}/{n_{1i}} + {\sigma^2_{0i}}/({n_{0i}})$ $- {\sigma^2_{\tau i}}/{n_i})$. It suffices to show that $\var(B^{-1}\sum_{i=1}^Bw_i^2\hat{\tau}_i^2\mid\cF, \cZ)$ converges to zero.

\begin{align*}
&\var\left(B^{-1}\sum_{i=1}^Bw_i^2\hat{\tau}_i^2\mid\cF, \cZ\right) \\&= B^{-2}\sum_{i=1}^B\var(w_i^2\hat{\tau}_i^2\mid\cF, \cZ) \leq B^{-2}\sum_{i=1}^B\E[w_i^4\hat{\tau}_i^4\mid\cF, \cZ] \\
&\leq B^{-2}\sum_{i=1}^Bn_i^2w_i^4\E\left[\left(\sum_{j=1}^{n_i}(Z_{ij}r_{1ij'}^2/n_{1i}^2 - (1-Z_{ij})r_{0ij}^2/n_{0i}^2)\right)^2\right]\\
&\leq B^{-2}\sum_{i=1}^Bn_i^2w_i^4\E\left[\left(\sum_{j=1}^{n_i}Z_{ij}r_{1ij}^2/n_{1i}^2\right)^2+ \left(\sum_{j=1}^{n_i}(1-Z_{ij})r_{0ij}^2/n_{0i}^2\right)^2\right]\\
&\leq B^{-2}\sum_{i=1}^Bn_i^3w_i^4\sum_{j=1}^{n_i}r_{1ij}^4/n_{1i}^4 +  B^{-2}\sum_{i=1}^Bn_i^3w_i^4\sum_{j=1}^{n_i}r_{0ij}^4/n_{0i}^4\\
&\leq 2C_1^4C_2/B,
\end{align*}
which tends to zero as $B\rightarrow \infty$. 
\end{proof}

\section{Proof of Theorem 1}
Noting that the random variables $w_i\hat{\tau}_i$ are independent, it suffices to show that the triangular array version of Lyapunov's condition is satisfied. Let $s^2_B = \sum_{i=1}^B\var(w_i\hat{\tau}_i\mid\cF,\cZ)$. As was demonstrated in the proof of Lemma 3, $B^{-1}\sum_{i=1}^B\E[w_i^4\hat{\tau}_i^4\mid\cF, \cZ]\leq C^*_1$ for a constant $C^*_1$. By Condition 2, $B^{-1}\sum_{i=1}^Bw_i^4\bar{\tau}_i^4 \leq C_2$ for a constant $C_2$. Further, by Condition 3 we have that $s^2_B/B = B^{-1}\sum_{i=1}^Bw_i^2\left({\sigma^2_{1i}}/{n_{1i}} + {\sigma^2_{0i}}/({n_{0i}}) - {\sigma^2_{\tau i}}/{n_i}\right)$ has a finite limit as $B\rightarrow\infty$, call it $L^*$. Hence, using a standard moment inequality,

\begin{align*}\underset{B\rightarrow\infty}{\lim}\frac{1}{s_B^4}\sum_{i=1}^B\E[|w_i\hat{\tau}_i - w_i\bar{\tau}_i |^4 \mid \cF, \cZ]
&\leq \underset{B\rightarrow\infty}{\lim}\frac{8}{B^2(s^2_B/B)^2}B\sum_{i=1}^B\E[w_i^4\hat{\tau}_i^4 \mid \cF, \cZ]/B + w_i^4\bar{\tau}_i^4/B\\ 
&\leq \underset{B\rightarrow\infty}{\lim} \frac{8}{B^2L^*}{B(C_1^* + C_2)}=0. 
\end{align*}
The conditions for Lyapunov's Central Limit Theorem are thus satisfied for $s_B^{-1}\left(\sum_{i=1}^Bw_i(\hat{\tau}_i - \bar{\tau}_i)\right)$.

\section{Proof of Theorem 2}
We prove the result for $S^2_1(Q_2)$ in the case of unequal block sizes. Let $\boeta_{Q_{1}} = [\bar{\Delta}, B^{-1}\lim_{B\rightarrow\infty}(w_i-1)w_i\hat{\tau}_i]$. Let $\Sigma_{Q_{1}}$ be a $2\times 2$ diagonal matrix with $\Sigma_{Q_{1}11}=1$ and $\Sigma_{Q_{1}22} = \lim_{B\rightarrow\infty}B^{-1}\sum_{i=1}^B(w_i-1)^2$. Let $\bbeta_{Q_{1}} = \Sigma_{Q_{1}}^{-1}\boeta_{Q_{1}}$.  Recalling that $Q_1$ and $M$ are orthogonal, we decompose $BS^2_1(Q_{2})$ as
\begin{align*}
BS^2_2(Q_{2}) &= B^{-1}(y^TW(I-H_{Q_{2}})Wy)\\
&= B^{-1}\left(y^TWWy - y^TWQ_{1}(Q_{1}^TQ_{1})^{-1}Q_{1}^TWy - y^TWM(M^TM)^{-1}M^TWy\right)
\end{align*} 
$B^{-1}y^TWWy$ converges in probability to $\lim B^{-1}(\sum_{i=1}^Bw_i^2(\bar{\tau}_i^2 + {\sigma^2_{1i}}/{n_{1i}} + {\sigma^2_{0i}}/({n_{0i}})- {\sigma^2_{\tau i}}/{n_i}))$ by Lemmas 1 and 3. By Lemmas 2 and 3, $B^{-1}y^TWQ_{1}(Q_{1}^TQ_{1})^{-1}Q_{1}^TWy$ converges in probability to $\bbeta_{Q_{1}}^T\Sigma_{Q_{1}}\bbeta_{Q_{1}}$, and $B^{-1}y^TWM(M^TM)^{-1}M^TWy$ converges in probability to $\bbeta_{M}^T\Sigma_M\bbeta_M$. Hence,
\begin{align*}B\left(S^2_{1}(Q_2) - \var(\hat{\Delta}\mid \cF, \cZ)\right) &\overset{p}{\rightarrow} B^{-1}\bbtau^TW(I-H_{Q_{2}})W\bbtau\end{align*} as desired. The proof for $S^2_1(Q_1)$ simply follows by eliminating the terms above pertaining to the matrix $M$. The proofs for $S^2_2(Q_1)$ and $S^2_2(Q_2)$ are analogous.
\section{Proof of Proposition 4}
\begin{proof} Define $\bar{\mathbf{f}}$ as before, and let $\tilde{\Sigma}$ be the covariance matrix for $\bhtau\mid \cC$, which by assumption homoskedasticity and equal block sizes has constant diagonal elements, call them $\nu$. 
\begin{align*}
B^2E[{S}^2_3(Q)\mid \cF, \cZ] &= tr(\tilde{\Sigma} (I-H_Q)\tilde{\Psi}_Q(I-H_Q)) + \bar{\mathbf{f}}^T(I-H_Q)\tilde{\Psi}_Q(I-H_Q)\bar{\mathbf{f}}.\end{align*}
The trace of $\tilde{\Sigma} (I-H_Q)\tilde{\Psi}_Q(I-H_Q)$ is given by
\begin{align*}
tr(\tilde{\Sigma} W(I-H_Q)\tilde{\Psi}_Q(I-H_Q)W) &=\nu \sum_{i=1}^B\left((1 - h_{Qii}) +  \sum_{j\neq i}\frac{h_{Qij}^2}{1-h_{Qjj}}\right)\\
&= \nu \sum_{i=1}^B\left((1 - h_{Qii}) +  \sum_{j\neq i}\frac{h_{Qij}^2}{1-h_{Qii}}\right)\\
&= \nu \sum_{i=1}^B\left(1-h_{Qii} +  h_{Qii}\right) = B\nu\\
\end{align*}
The second line utilizes symmetry of $I-H_{Q}$, while the third utilizes idempotence of $H_Q$, implying that $\sum_{j\neq i}h_{Qij}^2 = h_{Qii}(1-h_{Qii})$. Noting that $\var(\hat{\Delta}\mid \cF, \cZ) = \nu/B$ under the assumptions of the proposition and that $(I-H_Q)\tilde{\Psi}_Q(I-H_Q)$ is positive semidefinite completes the proof.
\end{proof}

\bibliographystyle{apalike}
\bibliography{../bibliography.bib}

\end{document}